\definecolor{darkgreen}{rgb}{0.0, 0.2, 0.13}
\definecolor{lightbrown}{rgb}{0.71, 0.4, 0.11}
\newcommand{\E}[1]{\mathbb{E}\left[#1\right]}
\newcommand{\Var}[1]{\mathbb{V}\left[#1\right]}
\newcommand{\fall}[2]{#1^{\underline{#2}}}
\newcommand{\StirlingTwo}{\genfrac{\{}{\}}{0pt}{}} % Stirling number of the 2nd kind
\newcommand{\bs}[1]{\boldsymbol{#1}}
\newcommand{\EE}{\mathbb{E}}
\newcommand{\NN}{\mathbb{N}}
\newcommand{\PP}{\mathbb{P}}
\newcommand{\RR}{\mathbb{R}}
\newcommand{\VV}{\mathbb{V}}
\newcommand{\cI}{\mathcal{I}}
\newcommand{\cO}{\mathcal{O}}
\newcommand{\scO}{{\scriptstyle\mathcal{O}}}
\newcommand{\FallingFact}[2] {{#1}^{\underline{#2}}}
\newcommand{\num}{\Omega} % the total number of traits ever invented
\newcommand{\defeq}{\mathrel{\mathop:}=}
\newcommand{\eqdef}{=\mathrel{\mathop:}}
\newcommand{\dd}{\,\mathrm{d}}
\theoremstyle{plain}
\newtheorem{prop}{Proposition}[section]
\theoremstyle{definition}
\journal{Theoretical Population Biology}
\begin{document}

%\tableofcontents

% For elsarcticle
\begin{frontmatter}

\title{Evolving genealogies in cultural evolution, the descendant process, and the number of cultural traits}

\author[1]{Joe Yuichiro Wakano \corref{cor1}}
\affiliation[1]{organization={School of Interdisciplinary Mathematical Sciences},
addressline={4-21-1 Nakano, Nakano},
postcode={164-8525},
city={Tokyo},
country={Japan}}
\ead{joe@meiji.ac.jp}

\author[2]{Hisashi Ohtsuki}
\affiliation[2]{organization={Research Center for Integrative Evolutionary Science, SOKENDAI (The Graduate University for Advanced Studies)},
addressline={Shonan Village, Hayama},
postcode={240-0193},
city={Kanagawa},
country={Japan}}

\author[3]{Yutaka Kobayashi}
\affiliation[3]{organization={Kochi University of Technology},
addressline={2-22 Eikokuji, Kochi City},
postcode={780-8515},
city={Kochi},
country={Japan}}

\author[4]{Ellen Baake}
\affiliation[4]{organization={Faculty of Technology},
addressline={Bielefeld University},
postcode={33501},
city={Bielefeld},
country={Germany}}

\cortext[cor1]{Corresponding author}

% Include title, authors, abstract, etc.
%\maketitle

\begin{abstract}
%% Text of abstract
We consider a Moran-type model of cultural evolution, which describes how traits emerge, are transmitted, and get lost in populations. Our analysis focuses on the underlying cultural genealogies; they were first described by \citet{aguilar2015} and are closely related to the ancestral selection graph of population genetics, wherefore we call them \emph{ancestral learning graphs}. We investigate their dynamical behaviour, that is, we are concerned with \emph{evolving genealogies}. In particular, we consider the total length of the genealogy of the entire population as a function of the (forward) time where we start looking back. This quantity shows a sawtooth-like dynamics with linear increase interrupted by collapses to near-zero at random times. We relate this to the metastable behaviour of the stochastic logistic model, which describes  the evolution of the number of ancestors as well as the number of descendants of a given sample. 

We superpose types to the model by assuming that new inventions appear independently in every individual, and all traits of the cultural parent are transmitted to the learner in any given learning event. The set of traits of an individual then agrees with the set of innovations along its genealogy. The properties of the genealogy thus translate into the properties of the trait set of a sample. In particular, the moments of the number of traits are obtained from the moments of the total length of the genealogy.
%(235<250 words)
\end{abstract}

%% Keywords
\begin{keyword}
cultural evolution \sep ancestral learning graph \sep ancestral selection graph \sep stochastic logistic model \sep metastability \sep merging of genealogies
\end{keyword}

\end{frontmatter}

%%% specify line-numbering %%%
%\linenumbers

% Main body of manuscript

%%%%%%%%%%    INTRODUCTION   %%%%%%%%%%%%%%%%

\section{Introduction}
\emph{Cultural traits} originate by innovation and are transmitted by learning, as opposed to genetic traits, which originate by mutation and are passed on via inheritance. \emph{Cultural evolution} is about how cultural traits emerge,  are  transmitted, and lost in populations. The quantitative description and mathematical modelling of cultural evolution has a venerable history, as testified, for example, by the classical monographs of \citet{CavalliSforzaFeldman1981} and \citet{BoydRicherson1985}. \emph{Stochastic} models of cultural evolution are, however, relatively recent. Here, models of population genetics are often used   as  blueprints, also with respect to the  questions asked and the methods employed.
For example, \citet{strimling2009tpb}, \citet{fogarty2015,fogarty2017}, and \citet{aoki2018} studied a discrete-time Moran-type model of cultural evolution under various assumptions on the origination and transmission of traits; they investigated quantities such as the number of traits maintained in a population and the corresponding popularity spectrum, that is, the  histogram of trait frequencies in the population,  akin to the site-frequency spectrum in population genetics. These and similar results are centered around the traits, focus on the stationary state, and characterise it by means of expectations of various quantities related to trait frequencies.

\citet{aguilar2015} were, to the best of our knowledge,   the first  to explicitly consider the cultural \emph{genealogies} in this kind of model. They worked with a continuous-time Moran-type model of cultural evolution, where the genealogy is closely related to the ancestral selection graph (ASG) of population genetics (\citet{NeuhauserKrone1997,KroneNeuhauser1997}; see also \citet{Mano2009}). \cite{aguilar2015} investigated the expected time to the most recent unique ancestor (MRUA), a quantity different from both the  most recent common ancestor (MRCA) and the ultimate ancestor (UA) of the ASG.  \citet{KobayashiWakanoOhtsuki2018} worked with a  discrete-time Wright--Fisher-type model of cultural evolution  assuming independent origination and transmission of traits and used genealogical thinking to obtain the expected number and age of distinct cultural traits in a finite sample. Altogether, little is known beyond expectations. 

The goal of this paper is twofold. We will first focus on the cultural genealogies, which are  of independent interest. In particular, we consider the total length of the genealogies (that is, the sum of the lengths of all its branches) as a function of the time where we start looking back, that is, we consider \emph{evolving genealogies}. Exploiting the connection to recent variants of the ASG of population genetics and to the stochastic logistic model of epidemiology and theoretical ecology, we analyse both the moments and the temporal fluctuations of the length of the genealogies at stationarity. Second, we superpose the traits on the model via  admittedly simplistic assumptions, namely  1) traits originate independently, 2) all traits of a cultural parent are jointly transmitted to the learner, and 3) traits are lost at death only. This way, the distribution of the number of traits is closely tied to the  genealogies,  but contains  additional randomness.

The paper is organised as follows. In Section 2, we describe and define our model forward in time and then introduce the genealogical process, which we term the \emph{ancestral learning graph (ALG)}; we also elaborate on the connections to population genetics and  epidemiology. 
In Section 3, we present the results. We first consider the moments of both the length of the genealogy and  the number of traits at stationarity, and then turn to their distributions. From there, we move on to the dynamics of the ALG, first when started from a fixed time and then as a function of the forward time where we start looking back.
In simulations, we explore the dynamics of the length of the resulting evolving genealogy (and hence the number of traits)
and observe a surprising sawtooth-like behaviour.  
We understand it by investigating the evolving genealogies (backward in time) and the corresponding descendant process (forward in time) with the help of the properties of the stochastic logistic process. In particular, the metastable state of the latter and the merging of ALGs will play crucial roles. As an aside, we also obtain the popularity spectrum. We discuss our findings in Section 4.

%%%%%%%%%%%%    THE MODEL 

\section{The model}

We follow the  model of \citet{aguilar2015}, who assume a population of  fixed size $N$ in continuous time with death-birth  and learning events. We primarily think of individuals as humans, but they could be reinterpreted as groups of humans, see the discussion. The model bears elements of both the  Moran model of population genetics and the SIS (susceptible-infected-susceptible) model of epidemiology; we will discuss these connections later.  We assume that newborn individuals do not have any cultural parent at the time of birth, but acquire role models one by one through learning events, each of which assigns a (single) new cultural parent to the focal individual. \citet{aguilar2015} restrict themselves to this so-called untyped model, where it is  intentionally left unspecified which traits are transmitted in a given learning event, and whether and how new traits emerge. We combine this  with a simplistic model that makes the appearance and transmission of traits explicit. This is motivated by the work of \citet{KobayashiWakanoOhtsuki2018},   who assume a discrete-time Wright--Fisher process, where every newborn invents a random number of new traits and is assigned a random number $K$ of potential role models; every trait of each of the role models is independently transmitted to the newborn with  probability $b$. Here we assume that, within the continuous-time model of \citet{aguilar2015},  new traits appear at constant rate $\mu$ in every individual, and we specify the mode of  transmission in accordance with \citet{KobayashiWakanoOhtsuki2018} with the choice $b=1$, so  the learning individual acquires \emph{all} traits of its cultural parent; this assumption is made for mathematical convenience.
The resulting model may also be considered as a continuous-time version of (special cases of) the models of \citet{strimling2009tpb}, \citet{fogarty2015, fogarty2017}, and \cite{aoki2018}. 
In the discussion, we will --- as an outlook --- waive the admittedly unrealistic assumption of all-or-none transmission  and  explore what happens if every trait of the role model is independently transmitted to the learner with probability $b<1$.

%%%%%%%%  2.1

\subsection{The forward process}
Consider a population of  $N$ individuals. We will work with   the \emph{graphical representation} as introduced by \citet{Harris1978} (see \cite{Liggett2011} for a review). This is now an indispensable tool in stochastic models in biology and physics and, for our case, shown in Figures~\ref{learning-1-trait-ips} and \ref{learning-ips-innov}. Each individual corresponds to a horizontal line segment, with the forward direction of time being left to right; the lines have labels in $[N]\defeq\{1,2,\ldots,N\}$ and extend through all of $\RR$. The events described below are represented by graphical elements superposed to this picture. Again in line with a common strategy for the Moran model, we describe the model in two steps: first the  untyped model, which contains the death-birth  and learning events; and second the typed model, which includes the traits, specifically, the innovation events and the transmission of traits on the occasion of learning events. 
\begin{enumerate}
\item \emph{Untyped version.}
Every individual can experience two kinds of events at any instant of time and independently of the remaining population: a death event and a learning event. Death  occurs at rate $u>0$ per individual, and when this happens, a newborn will replace it. 
Learning events occur at rate $s>0$ per individual. In a learning event, the individual randomly samples a cultural parent uniformly from among the $N$ individuals (including itself). So far, the learning events are abstract; we do not yet say what is transmitted along them. In the graphical representation (Fig.~\ref{learning-1-trait-ips}), death-birth events are symbolised by crosses; they appear
at rate~$u$ on every line, by way of independent Poisson point processes. Learning events are depicted by arrows between the lines, with the cultural parent at the tail and the learner at the tip; arrows appear at rate $s/N$ per ordered pair of lines, again by way of independent Poisson point processes on $\RR$. We omit arrows where the parent and the learner are identical.

\begin{figure}
\begin{center}
\includegraphics[width=70mm]{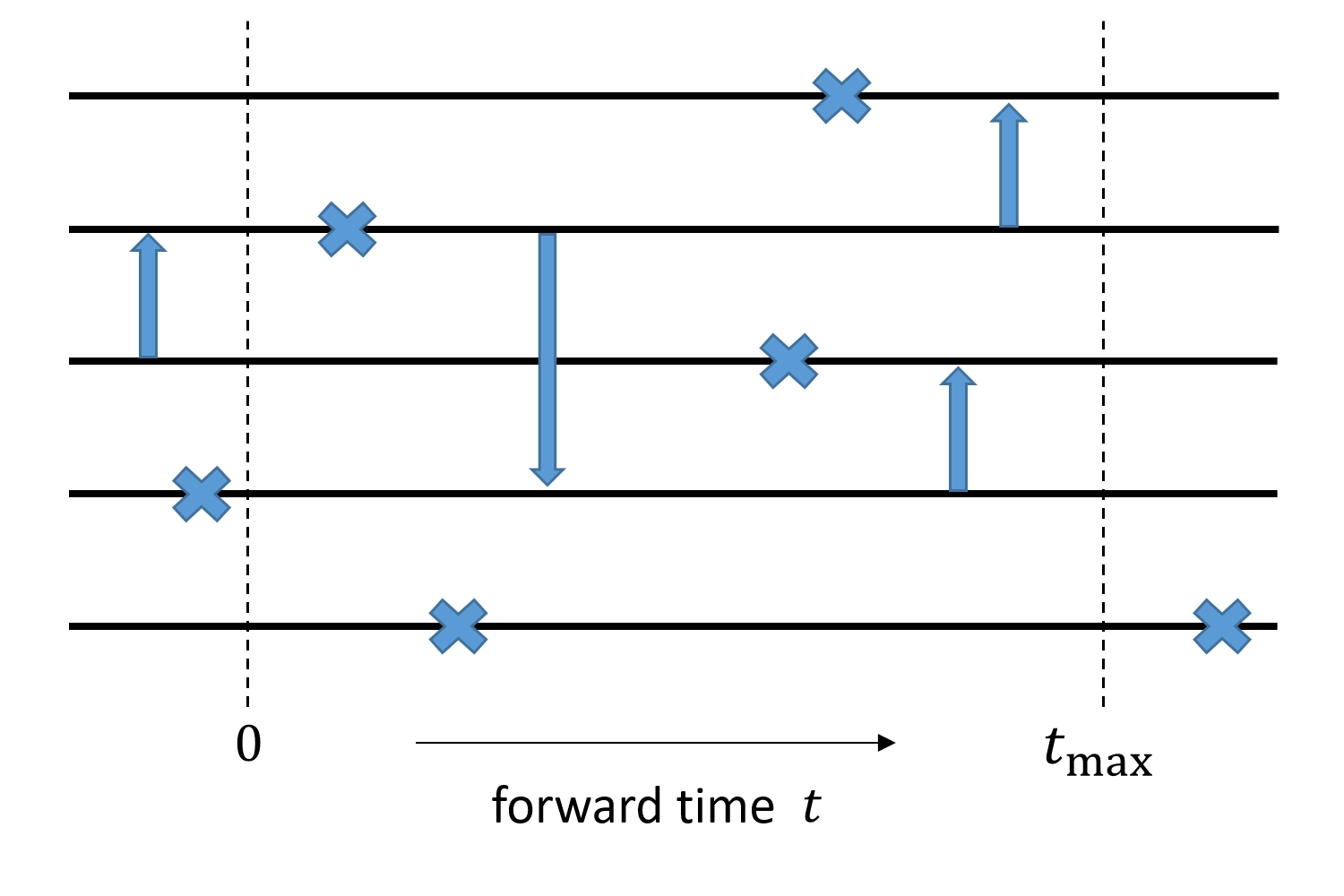}
\label{learning-1-trait-ips} \end{center}
\caption{Graphical representation of a realisation of the untyped  model ($N=5$). The lines extend through all of $\RR$ and are decorated with crosses (death-birth events) and learning arrows. In this way, the time evolution is defined for any time interval such as $[0,t_{\text{max}}]$.
}
\end{figure}

\item \emph{Including the types.} The type of an individual is the collection of its cultural traits. Given a realisation of the untyped system, we  turn it into a typed one as folllows. We assume that every individual independently  invents  new traits at rate $\mu>0$. In the graphical representation, these innovations are indicated by circles, which are laid down on every line by way of a Poisson point process on $\RR$ at rate $\mu$, see Figure~\ref{learning-ips-innov}. We assume that, first, all innovations are non-recurrent, that is, have never occurred in the population before. Second, in  every learning event, the learner acquires all traits carried by the cultural parent; this happens by merging the knowledge sets of the learner and the parent. Third, we assume that death events eliminate all traits, so the newborn is devoid of any trait. 

We now assign a set of traits to every line at time $t=0$ in an exchangeable way (that is, according to a law that is invariant under permutation of lines); this set may or may not be  empty. From $t=0$ onwards, we then number the innovations consecutively in the order of their appearance. If we finally propagate the traits forward in time according to the rules just described, we get the types, that is, the collection of traits, for every individual and any time $t>0$.
Let now $K_{\alpha}(t) \subset \mathbb{N}$ be the set of traits that individual $\alpha \in [N]$ knows at time $t \in \mathbb{R}_{\geq 0}$; that is, the individual's \emph{knowledge set}. For a subset of individuals $G \subseteq [N]$, we define 
\begin{equation}
    K_{G}(t) \defeq \bigcup_{\alpha \in G} K_{\alpha}(t)
\end{equation}
by slight abuse of notation (so $K_{\{\alpha\}}(t)=K_{\alpha}(t)$).
With this, we denote the number of traits of a `typical'  group of $n$ individuals at time $t$, that is, the cardinality of the group's knowledge set, as
\begin{equation}\label{def_C_t(n)}
    C_{n}(t) \defeq |K_{[n]}(t)|.
\end{equation}
By `typical', we here allude to the fact that,
by exchangeability, the $K_G(t)$ are identically distributed for all $G \subseteq [N]$ with $\lvert G \rvert = n$,  so we choose $G=[n]$ as their representative. Clearly, then, $C_{n+1}(t) \ge C_n(t)$ for all $t$.
We will see below that the distribution of $(C_{n}(t))_{t \geq 0}$ will become stationary as $t \to \infty$; we will denote by  $C_{n} = C_{n}(\infty)$ a random variable that has this stationary distribution.

\end{enumerate}

\begin{figure}
\begin{center}
\includegraphics[width=70mm]{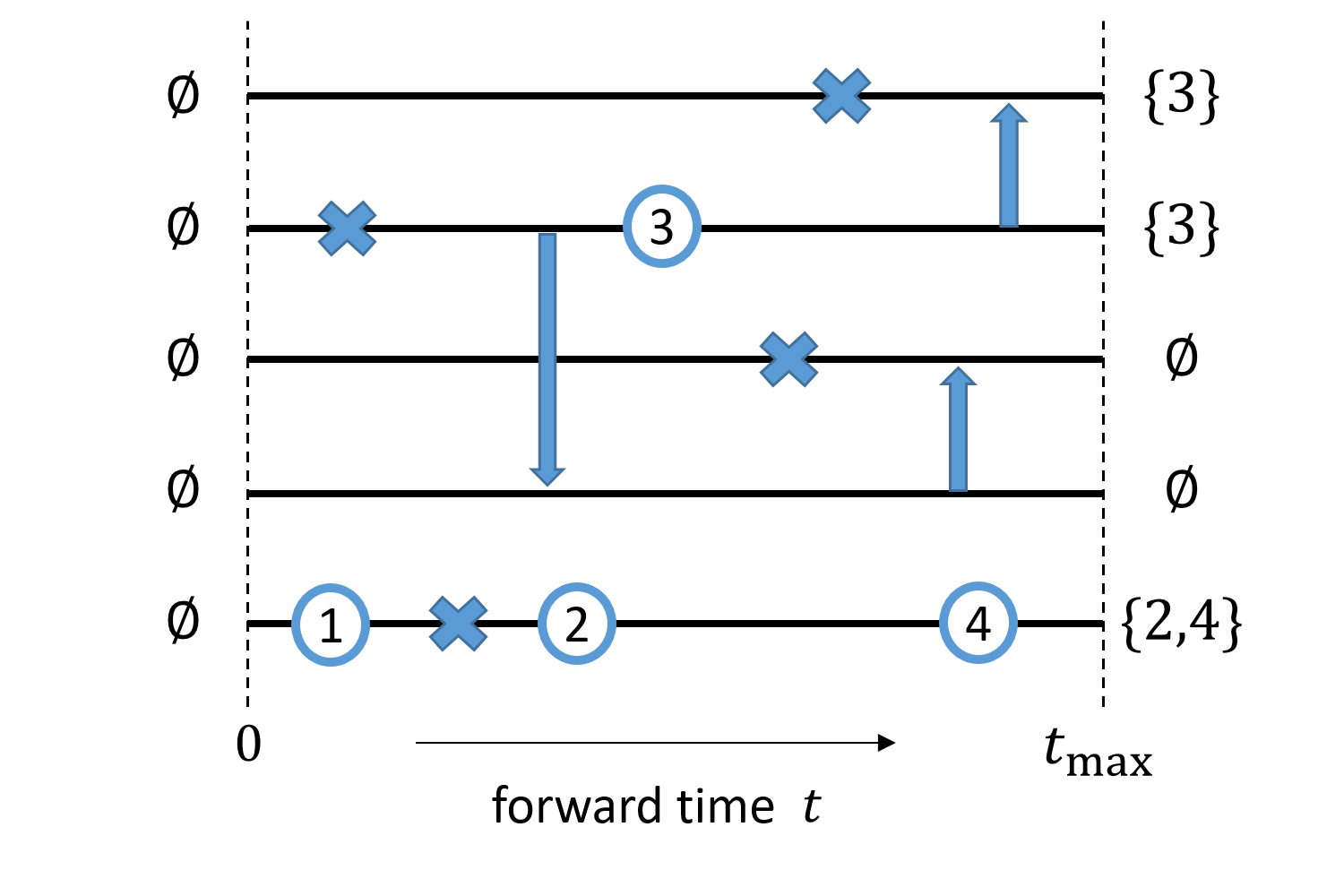}
\end{center}
\caption{\label{learning-ips-innov} A typed version of Figure~\ref{learning-1-trait-ips} in the interval $[0, t_{\text{max}}]$. 
}
\end{figure}

%%%%%%%%%% 2.2

\subsection{Formal definition of the forward process}\label{subsec:FullForward}
Let
\begin{equation}
\Phi(t) \defeq \big ( K_1(t), \ldots,K_N(t),\num(t) \big ) 
\end{equation}
be the state  of the typed graphical construction at time $t$, where $\num(t)\in \mathbb{N}_0$ denotes the number of traits ever invented until $t$, including those assigned at time 0; so $\Omega(0)=\lvert \cup_{\alpha \in [N]} K_\alpha (0) \rvert$ and $K_\alpha(t)\subseteq \{1,2,\dots,\num(t)\}$ (where the latter is agreed to be $\varnothing$ if $\Omega(t)=0$), and
\begin{equation}
    \underbrace{2^{\mathbb{N}} \times \dots \times 2^{\mathbb{N}}}_{N \text{ times}} \times \mathbb{N}_{0}   
\end{equation}
is the state space. Note that 
$\num(t)=\lvert \cup_{t^\prime \in [0,t]}\cup_{\alpha \in [N]} K_\alpha (t^\prime) \rvert$. 
The process starts at
$\Phi(0)=(K_1(0), \ldots, K_N(0),\num(0))$ and evolves as follows. If $\Phi(t)=(k_1,\ldots,k_N,\omega)$,  the following events may happen: 

At rate $Nu$, a death event occurs and a uniformly-chosen individual $\alpha \in [N]$ loses all traits, so  $k_\alpha$ changes to $\varnothing$, and we see the transition
\begin{subequations}
\begin{equation}
    (k_1,\ldots,k_{\alpha}, \ldots, k_N,\omega) \longrightarrow (k_1,\ldots, \varnothing, \ldots,k_N,\omega)
\end{equation}
with $\varnothing$ in position $\alpha$.
At rate $N\mu$, an innovation event occurs and a uniformly-chosen individual $\alpha$ acquires a new trait, so  $\omega$ changes to $\omega+1$, $k_\alpha$ changes to $k_\alpha \cup \{\omega+1\}$, and the transition is
\begin{equation}
    (k_1,\ldots,k_{\alpha}, \ldots, k_N,\omega) \longrightarrow (k_1,\ldots, k_{\alpha} \cup \{\omega+1\}, \ldots,k_N,\omega+1).
\end{equation}
At rate $Ns$, a learning event occurs and a uniformly-chosen individual $\alpha$ learns from another uniformly-chosen individual $\beta$, so  $k_\alpha$ is replaced by  $k_\alpha \cup k_\beta$ and 
\begin{equation}
    (k_1,\ldots,k_{\alpha}, \ldots, k_N,\omega) \longrightarrow (k_1,\ldots, k_{\alpha} \cup k_\beta , \ldots,k_N, \omega);
\end{equation}
\end{subequations}
note that nothing happens if $\alpha=\beta$.
Note also that we have defined the process in a way that is close to how we will later simulate it; in this formulation, it does not become stationary for $t \to \infty$ since, due to the consecutive labelling of the traits, both $(\num(t))_{t \geq 0}$ and the $(K_{\alpha}(t))_{t \geq 0}$  are transient. The quantities we will analyse, however, will only rely on the number of traits rather than their labels and converge to stationarity as $t \rightarrow \infty$, as will become obvious in the backward picture.

%%%%%%%%%%% 2.3

\subsection{The ancestral learning graph (ALG)}
Let us now take a backward perspective and consider the genealogy of  $n \in [N]$ individuals sampled at forward time $t$, to which we will refer as the present; we may, but need not, choose $t$ as some fixed final time $t_{\text{max}}$. Starting again from the untyped version of the graphical representation, let us trace back the ancestral lines of the sample,  as illustrated in Figure~\ref{alg-1-trait}. More precisely,  we first describe the untyped version of the \emph{ancestral learning graph (ALG)}, which consists, at any given time, of  the set of all cultural parents, parents of parents and so forth, that is, of all cultural ancestors,  of the sample.  It is constructed as follows. 

\begin{figure}
\begin{center}
\includegraphics[width=70mm]{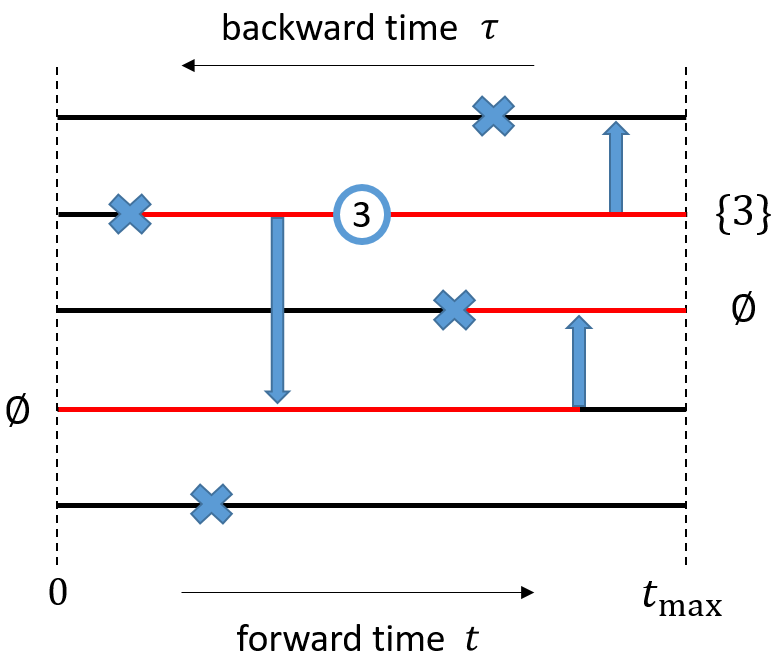}
\end{center}
\caption{\label{alg-1-trait} An ALG embedded in the realisation of the forward model in Figure \ref{learning-ips-innov}, starting from $t=t_{\text{max}}$. Red and black lines are ancestral and non-ancestral, respectively, to the sample of size $n=2$ taken at $\tau=0$.}
\end{figure}

\begin{enumerate}
\item
\label{traceback}
Start the graph by tracing back the lines emerging from the  individuals sampled at forward time $t$.  Denote backward time by $\tau$, so backward time $\tau$ corresponds to forward time $t-\tau$; in particular,  forward times $0$ and $t$ correspond to backward times $\tau=t$ and $\tau=0$, respectively. Proceed as follows in an iterative way in the backward direction of time until the initial time is reached, or until all ancestral lineages have been hit by a cross. 
\begin{enumerate}
\item 
\label{branching}
 If a line currently  in the graph is hit by the tip of a learning arrow, we trace back both  ancestors, namely that of the cultural parent  (at the tail of the arrow; in line with  ASG terminology, we call it the incoming branch) and that of the learning individual  (at the tip; again, as in the ASG, we call it the continuing branch). That is, we add the incoming line to the graph; this results in a \emph{branching event}. Note that the  graph remains unchanged if it already contains the incoming branch. 
\item
 If a line currently  in the graph is hit by a cross, we do not follow it  any further, that is, we prune it.
\end{enumerate}
The resulting \emph{untyped ALG} consists of all cultural ancestors of the sampled individuals over time, including the individuals in the sample themselves until their births (that is, at branching events, individuals are counted as their own cultural ancestors); the untyped ALG  corresponds to the untyped version of the forward model. 
\item The untyped ALG may  be turned into a \emph{typed} one by first assigning, to each line in the graph that is still alive at backward time $t$, the initial trait set from the forward model; if there are no lines left at backward time $t$, this step is void. Then, the traits are propagated  forward (that is, to the right) in the genealogy according to the same rules as before. That is, when a new trait was acquired in the forward model on a line belonging to the genealogy, it is added to the trait set of the individual; in each branching event, the trait set of the parent is united with that of the learner; and if a line encounters a cross,  it loses all traits. This way, a trait set is associated with every line element of the graph.
\end{enumerate}

Note that the untyped ALG obtained in Step 1 is the cultural genealogy of \citet{aguilar2015}, with the only difference that we allow for individuals to also choose themselves as  cultural parents; but this is an event without any effect, so it only changes the time scale by a factor of $(N-1)/N$.
The typing in Step 2 describes how traits appear and are transmitted along the given genealogy; this step was intentionally left unspecified by \citet{aguilar2015}. Note also that we can obtain a \emph{stationary} situation for any $t \geq 0$ by running the untyped ALG until all (ancestral) lines have been hit by a cross, even if this requires $\tau > t$; this is where the graphical construction on all of $\RR$ comes in handy. Turning the untyped ALG into a typed one then does not require the assignment of initial types --- the types are solely determined by origination and transmission of traits along the graph.

%%%%%%%%%%%%  2.4

\subsection{The ALG as a stochastic process\label{AppendixBackwardSimulation}}
So far, we have described the ALG as a function of a given realisation of the graphical representation. We now define it as a stochastic process  independent of such a realisation, again first in the untyped  
and then in the typed version. 
\begin{enumerate}
\item
For the untyped version, let $\Lambda^t_\alpha(\tau) \subseteq [N]$ denote the set of ancestors at backward time $\tau \geq 0$ of individual $\alpha \in [N]$ at forward time $t \geq 0$; note that we allow $\tau > t$ as before.
For $G \subseteq [N]$, set $\Lambda^t_G(\tau) \defeq \cup_{\alpha \in G} \Lambda^t_\alpha(\tau)$ by slight abuse of notation (so $\Lambda^t_{\{\alpha\}}(\tau) = \Lambda^t_\alpha(\tau)$). Then the process $(\Lambda(\tau))^t_{\tau \geq 0} \defeq \big (\Lambda^t_1(\tau),\ldots,\Lambda^t_N(\tau)\big)_{\tau \geq 0}$ contains the complete genealogical information of all individuals in the present population. It evolves as follows. Clearly $(\Lambda^t_1(0), \ldots, \Lambda^t_N(0)) = (\{1\},\{2\}, \ldots, \{N\})$. If, for any time $\tau \geq 0$, the current state is $\Lambda^t(\tau) = (\lambda_1, \ldots, \lambda_N)$, the following events may happen:  at rate $Nu$, a death-birth event hits a randomly-chosen individual $\beta$. So $\beta$ is removed from the set of ancestors of anybody, and
\begin{subequations}
\begin{equation}
    (\lambda_1,\ldots, \lambda_N)
    \longrightarrow (\lambda_1 \setminus \{\beta\},\ldots, \lambda_N \setminus \{\beta\}).
\end{equation}
At rate $Ns$, a learning event occurs and a randomly-chosen individual $\beta_1$ learns from another randomly-chosen individual $\beta_2$ (where $\beta_2 = \beta_1$ is allowed, in which case the event is silent). So if, for $\alpha \in [N]$, $\beta_1$ is contained in $\lambda_\alpha$, then $\beta_2$ joins this ancestral set, and otherwise nothing happens:
\begin{equation}
    \lambda_{\alpha}
    \longrightarrow \begin{cases} \lambda_\alpha \cup \{\beta_2\}, & \text{if } \beta_1 \in \lambda_\alpha, \\ \lambda_{\alpha}, & \text{otherwise}, \end{cases} \qquad \alpha \in [N].
\end{equation} \label{eq:lambdaTransitions}
\end{subequations}
In words, any individual $\alpha$ who has $\beta_1$ as its ancestor just before the focal learning event will add $\beta_2$ as another ancestor as a result of the event.

\item 
The untyped ALG may  be turned into a \emph{typed} one by superposing the untyped graph with a Poisson point process that lays down innovations at rate $\mu$ on every line segment. Together with an initial assignment of types to the lines alive at backward time $t$ (recall this step is void if we run the graph until there are no lines left) and propagation along the graph as in the forward model, this determines the types along all lines in the graph.
\end{enumerate}

Returning to the untyped version, let us note for later use that, for any $n \in [N]$ and $\alpha \in [n]$, we have  $\Lambda^t_\alpha(\tau) \subseteq \Lambda^t_{[n]}(\tau)$ for all $t,\tau \geq 0$ and so
\begin{equation}\label{identical_sets}
    |\Lambda^t_\alpha(\tau_1)|=|\Lambda^t_{[n]}(\tau_1)|
\quad \text{for some $\tau_1\geq 0$ implies}\quad
 \Lambda^t_\alpha(\tau) = \Lambda^t_{[n]}(\tau) \hspace{10pt}
    \text{for all } \, \tau \ge \tau_1.
\end{equation}

In what follows, an important role will be played by the \emph{line-counting process} $(Y^t_n(\tau))_{\tau \geq 0}$ of the ALG. That is,
\begin{equation}\label{Y^t_n_Lambda_n} 
Y^t_n(\tau) \defeq \lvert \Lambda^t_{[n]}(\tau)\rvert
\end{equation}
is the number of lines at backward time $\tau$ of the sample $[n]$ of individuals taken at forward time $t$; due to exchangeability, we  allow ourselves to simply speak of a sample of size $n$. Also, we will sometimes omit the dependence on   $n$. 
For any given $t$, $(Y^t(\tau))_{\tau \geq 0}$ is  a birth-death process\footnote{This \emph{birth-death process} is not to be confused with the \emph{death-birth events} in the model, as indicated by crosses in the graphical representation.} on 
$[N]_0:= \{0,1,\ldots,N\}$ with initial value $Y^t_n(0)=n$ and birth and death rates\footnote{not to be confused with the realisations $\lambda_\alpha$ of $\Lambda_\alpha$ and the innovation rate $\mu$.} 
\begin{equation}\label{bdrates}
\lambda_y \defeq sy \frac{N-y}{N} \quad \text{and} \quad \mu_y \defeq uy,
\end{equation}
respectively, when $Y^t(\tau)=y$ (note that the factor $(N-y)/N$ is the probability that a learning arrow comes from outside the current graph). 
Clearly, $(Y^t(\tau))_{\tau \geq 0}$  is an absorbing Markov chain with 0 as the only absorbing state. The behaviour of the process  is very well studied, since it is, at the same time, the \emph{stochastic logistic process} of ecology and epidemiology; in particular, $(Y^t_n(\tau))_{\tau \geq 0}$ is the number of infected individuals in a stochastic  SIS model  with infection rate $s$ and recovery rate $u$ per individual when started with $n$ infected individuals at time $t$ \cite[Ch.~8.2]{AnderssonBritton2000}. 

As alluded to already, another crucial quantity will be $L^t_{n}$, the \emph{(total) length of the genealogy}, that is, the total length of all branches, as $\tau \to \infty$, in the ALG of a sample of size $n$ taken at time $t$, that is,
\begin{equation}\label{Ln_in_path-integral}
    L^t_{n} \defeq \lim_{\tau \to \infty} L^t_{n}(\tau), \quad \text{where}  \quad L^t_{n}(\tau) \defeq \int_{0}^{\tau} Y^t_n(r) \mathrm{d}{r} = \int_0^\tau \lvert \Lambda^t_{[n]}(r) \rvert \mathrm{d} r = \int_{t-\tau}^t \lvert \Lambda^t_{[n]}(t-s) \rvert \mathrm{d} s.
\end{equation}
Indeed, the limit exists and is  finite for almost all realisations of the ALG, since $(Y^t_n(\tau))_{\tau \geq 0}$ absorbs in 0 almost surely in finite time for any given $n$. In  epidemiology, the quantity $L^t_{n}$, sometimes up to a constant, is  referred to as the \emph{cost of an epidemic}, because it measures the total time that individuals spend in the infected state in one episode of the epidemic started with $n$ individuals at time $t$ (\citealp{jerwood1970note, mcneil1970integral, downton1972area, gani1972cost, ball1986unified, SuarezChavez1999}; \citealp[Secs.~2.2 and 2.4]{AnderssonBritton2000}; \citealp{crawford2018computational}). Previous work has concentrated on its mean $\mathbb{E}[L^t_{n}]$ or its Laplace transform $\mathbb{E}[e^{-\theta L^t_{n}}]$ (or its moment-generating function $\mathbb{E}[e^{\theta L^t_{n}}]$) but, to the best of our knowledge, explicit expressions for the variance (or higher-order moments) are unavailable so far (but see \citet{StefanovWang2000} for a numerical evaluation of the variance).

Let us emphasise that we have defined the ancestral processes \emph{as functions of the forward time $t$ where we start looking back}; this will become important in Section~\ref{sec:EvolvingGenealogies}. For the moment, we fix $t=t_{\text{max}}$ and, when we do so, omit the superscript, so $\Lambda_\alpha(\tau) \defeq \Lambda^{t_{\text{max}}}_\alpha(\tau)$ and likewise for the other quantities.

%%%%%%%%%%%%%%  2.5

\subsection{The connection with models of population genetics}

We have already mentioned the connection of the ALG with the SIS model and, in particular, the stochastic logistic process; this will be used extensively later. The genealogical point of view, however, is less important in epidemiology; notably, in epidemiology, the stochastic logistic process usually appears in forward-time models, whereas our $(Y(\tau))_{\tau \geq 0}$ is an ancestral process. It is therefore worthwhile to investigate the somewhat subtle connection with population genetics theory more closely, where the  backward perspective is crucial and highly developed. 

We claim that our learning model is closely related to, but not identical with, a Moran model with selective reproduction and deleterious mutation, but \emph{without} resampling (see \citet{BaakeWakolbinger2018} for a review of the Moran model and its ancestral structures under the various evolutionary forces mentioned). To see this, let us first consider the Moran model with selective reproduction, deleterious mutation, and \emph{with} resampling, and then compare it to the learning model. In this model, selective arrows appear between every ordered pair of lines at rate $s/N$; both the incoming and the continuing branches are then potential genetic parents of the affected individual, with only one of them being the true parent. These arrows correspond to the learning arrows of our model. The difference is that both the incoming and the continuing lines are (true) cultural parents (recall that learners are counted as their own cultural parents); there are no potential parents. In line with this, a learner's knowledge set is not replaced by, but merges with that of the cultural parent.
Furthermore, it is important to note that, while the selective arrows in the Moran model represent (genetic) selection, the corresponding learning arrows in the ALG are (culturally) neutral since individuals are chosen as cultural parents independently of the traits they carry. Next, resampling is a central feature of the Moran model and means that selectively neutral reproduction arrows  appear at rate $1/N$ between every ordered pair of lines. They indicate that the individual at the tail is the genetic parent of the newborn individual at the tip of the arrow. Since resampling is absent in the learning model, we  speak of the latter as a \emph{Moran-type} model. Finally, deleterious mutation events in the Moran model happen at rate $u$ per line and mean that the type of that individual is `reset' to the (selectively) weakest type. The corresponding events in the learning model are the death-birth events, where an individual is replaced by an ignorant newborn.

The connection is mirrored in the corresponding ancestral structures. Indeed, the untyped ALG is identical with a finite-population variant of the untyped ASG that lacks coalescence events (because there is no resampling), and where a line is pruned if it is hit by a deleterious mutation. The  difference lies in the interpretation: the  lines of the ALG correspond to the cultural parents of the sample, whereas the lines of the pruned ASG represent the potential genetic parents. 

As a consequence, the ALG belongs into  the context of the rich recent literature on variants of the ASG and their applications. For example, \cite{PfaffelhuberPokalyuk2013} and \cite{Grevenetal2016} used the ASG to study fixation of a beneficial mutant; \cite{Cordero2017} studied the ASG in a law-of-large-numbers regime (note that here, as well, the coalescence events vanish); \cite{CorderoMoehle19}  studied the stationary number of lines under generalised coalescence mechanisms; \cite{Boenkostetal2021}  extended the ASG to the Cannings model of reproduction; \cite{CorderoVechambre23}  investigated it in a random environment; and \cite{BaakeEsercitoHummel2023}  included multiple branching, pruning individual lines, and killing the entire process in order to account for a certain kind of frequency-dependent selection, deleterious, as well as beneficial mutations.

In the typed learning model, the innovation mechanism is  similar to the mutation process in the \emph{infinite-sites model} of population genetics (see, for example, \citet[Ch.~1.2]{Wakeley2009}), where sequences of infinite length are considered and every mutation hits a site that has never mutated before.

If we reduce the number of possible traits to one and do not allow for innovations (that is, consider the limiting case $\mu=0$), we only have two types of individuals: those that have and those that do not have the trait. Let $X(t) \in [N]_0$ be the number of individuals with the trait (so $N-X(t)$ do not have it). 
If $X(t)=x$, we  have a transition to  $x-1$ at rate $ux$ (death) and  to $x+1$ at rate  $sx(N-x)/N$ (learning). This is equivalent to a two-type Moran model  with selective reproduction at rate $s$ and deleterious mutation at rate $u$, but without resampling, without frequency-dependent  selection, and without beneficial mutations; it is a special case of the model tackled by \cite{BaakeEsercitoHummel2023}. Furthermore, note that $(X(t))_{t \geq 0}$ has the same birth and death rates as $(Y(\tau))_{\tau \geq 0}$ and hence the same law.

The type-frequency process of the Moran model with resampling, frequency-dependent selection, as well as deleterious and beneficial mutation on the one hand  and the line-counting process of the corresponding ASG variant on the other hand are in factorial moment duality with each other \citep[Theorem~2.3]{BaakeEsercitoHummel2023}. This translates into the factorial moment duality between the type-frequency process of the above single-trait learning model  and the line-counting process of the corresponding untyped ALG as follows. For $z,m \in \NN_0$, let  
\begin{equation}
    \fall{z}{m} \defeq 
    \begin{cases}
        1, & m = 0, \\
        z (z-1) \cdots (z-m+1), & m \geq 1,
    \end{cases}
\end{equation}
denote the falling factorial.
The processes $(X(t))_{t \geq 0}$ and $(Y(\tau))_{\tau \geq 0}$ are dual with respect to the duality function
\begin{equation}\label{duality_function}
H(x,y) \defeq 
\frac{\FallingFact{(N-x)}{y}}{\FallingFact{N}{y}},  \quad x,y \in [N]_0,
\end{equation} 
that is, $(X(t))_{t \geq 0}$ and $(Y(\tau))_{\tau \geq 0}$ satisfy the relation
\begin{equation}\label{duality}
\mathbb{E}[H(X(t),y) \mid X(0)=x]=\mathbb{E}[H(x,Y(t)) \mid Y(0)=y]
\end{equation} 
for $x,y \in [N]_0$ and $t \ge 0$.

Note that $H(x,y)$ is the probability to obtain  only individuals without the trait when sampling $y$ individuals without replacement from a population that contains $x$ individuals with the trait (and $N-x$ individuals without it); and the duality  means that the  sampling probability at time $t$ can be obtained either via the forward or via the backward process. As a consequence, one may  obtain, and gain insight into,  properties of the learning model  forward in time  by studying its dual process. Since $(X(t))_{t \geq 0}$ and $(Y(\tau))_{\tau \geq 0}$ have the same law, they are actually self dual with respect to $H$. 

Let us mention that factorial moment dualities  have a long history in the context of fixed-size population genetic models. In the  case without selection, they already appear in papers by  \citet{Cannings74} and \citet{Gladstien77a,Gladstien77b,Gladstien78}, at a time where neither genealogical processes  nor the concept of dualities for Markov chains had been formulated yet. Rather, the dualities appear in terms of algebraic identities between matrices, and are used  to calculate the eigenvalues of the Markov transition matrices via a similarity transform; the connection with the backward point of view is at most implicit. In 1999, \citeauthor{mohle1999concept} established factorial moment dualities in neutral genetic models, explicitly and in terms of backward processes; this paper also provides general background on duality in models of population genetics.

%%%%%%%%%   2.6

\subsection{Law of large numbers}
\label{subsec:LLN}
Let us briefly comment on the deterministic limit, where $N \to \infty$ without rescaling of parameters or time. Then $(Y(\tau))_{\tau \geq 0}$ turns into a linear birth-death process $(\widetilde Y(\tau))_{\tau \geq 0}$ on $\NN_0$ with birth rate $s$ and death rate $u$ per individual. In contrast to the finite-$N$ case, $(\widetilde Y(\tau))_{\tau \geq 0}$ does not necessarily absorb in 0. For  $s \leq u$, 
$(\widetilde  Y(\tau))_{\tau \geq 0}$ is (sub)critical 
with $\PP(\lim_{\tau \to \infty} \widetilde  Y(\tau) = 0 \mid \widetilde  Y(0)=1)=1$;  for $s > u$, $(\widetilde  Y(\tau))_{\tau \geq 0}$ is supercritical, and $\PP(\lim_{\tau \to \infty} \widetilde Y(\tau) = 0 \mid \widetilde  Y(0)=1 )=u/s<1$. If the process does not die out, it grows to infinite size almost surely. (These are classical results from the theory of branching processes \citep[Ch.~III.4]{AthreyaNey1972}.)

In the deterministic limit of the single-trait model $(X(t))_{t \geq 0}$ as described in the previous section, the sequence of processes $(X^{(N)}(t)/N)_{t \geq 0}$ (where the upper index indicates the population size) converges, as $N \to \infty$, to the solution of the initial value problem  
\begin{equation}\label{ode_xi}
\dot \xi(t) =  \xi(t) [s(1-\xi(t))-u], \quad  \xi(0) = \xi_0 \in [0,1],
\end{equation}
provided $X^{(N)}(0)/N \to \xi_0$. The differential equation has two equilibria: one at $1-u/s$, the other at $0$. They perform a transcritical
(or exchange of stability) bifurcation at $s=u$: for $s<u$,  the equilibrium at $0$ is attracting, while the one at $1-u/s$ (which is then $<0$) is  repelling; and vice versa for $s>u$, where $1-u/s>0$. For $s=u$, the two equilibria coincide at 0, which is then attracting.

The duality \eqref{duality}  carries over to the deterministic limit; specifically, evaluating \eqref{duality} for $y=1$, $N \to \infty$ such that $x/N \to \xi_0$, and $t \to \infty$ gives for the unique stable equilibrium $\bar \xi$ of the differential equation that $1-\bar \xi = \EE[(1-\xi_0)^{\widetilde Y(\infty)}\mid \widetilde Y(0)=1]$ for $\xi_0 \in (0,1]$ and so $\bar \xi = 1-\PP(\lim_{\tau \to \infty} \widetilde Y(\tau) = 0 \mid \widetilde  Y(0)=1)$. For $s \leq u$, therefore, the trait is always absent at equilibrium, while, for $s>u$, the trait is present in a positive proportion of  individuals at the stable equilibrium. See \citet{Cordero2017}, \citet{BaakeCorderoHummel2018}, or \citet{BaakeWakolbinger2018} for the details about the deterministic limit. 

Let us note for later use that the differential equation \eqref{ode_xi} is not only the deterministic limit of our single-trait model, but also the deterministic limit  of  $(Y^{(N)}(\tau)/N)_{\tau \geq 0}$ (that is, for $N \to \infty$ in the sequence of processes $(Y^{(N)}(\tau)/N)_{\tau \geq 0}$ with population size $N$, without rescaling of parameters or time), provided that $Y^{(N)}(0)/N$ converges;
this is clear because $(X^{(N)}(t))_{t \geq 0}$ and $(Y^{(N)}(\tau))_{\tau \geq 0}$ have the same law.

In what follows,  we continue to adhere to finite $N$, but the two regimes $s<u$ and $s>u$  still behave in qualitatively different ways (as long as $s$ is not too close to $u$ and with a smooth transition between the two regimes), see \citet[Chap.~2.5]{Nasell2011} or \citet{Foxall2021}. We will therefore continue to  use the notions \emph{subcritical} 
and \emph{supercritical}, in line with the literature.

%%%%%%%%%%%  RESULTS  %%%%%%%%%%%

\section{Results}
The typed ALG is the appropriate genealogical structure to study the dynamics and the stationary distribution  of $(C_{n}(t))_{t \geq 0}$.  Since the typed ALG results from the untyped one by superposition of  a Poisson point process that lays down innovations at rate $\mu$ on every line, and every innovation on the genealogy is passed on to the sample, the crucial quantity is the  length  of the genealogy, $L_n$ of  \eqref{Ln_in_path-integral}. 
By the genealogical picture, 
\begin{equation}\label{Cn to Ln}
    C_{n} \sim \mathrm{Poi}(\mu L_n),
\end{equation}
that is, $C_{n}$ is a mixed Poisson random variable: a Poisson random variable with a random parameter $\mu L_{n}$.
So the genealogies, and $L_n$ in particular, are the fundamental quantities, which are also of independent interest, and $C_n$ contains additional randomness due to the Poissonisation.

We will, in what follows, consider the moments and the  distributions of $L_n$ and $C_n$ and then turn to the dynamics of $(L^t_n)_{t \geq 0}$ and $(C_n(t))_{t \geq 0}$,
throughout with an eye on the additional fluctuations induced by the innovation process.

\subsection{The moments}

\subsubsection{Path integrals of birth-death processes}
We start by observing that $L_n$ in \eqref{Ln_in_path-integral} is a special case of the path integral of a function $f: \NN_0 \to \RR_{\geq 0}$ over a general continuous-time birth-death process $(Z(t))_{t \geqslant 0}$ on $\NN_0$ with  unique absorbing state  0, namely
\begin{equation}\label{def_path-integral}
    \mathcal{I}_{i}(f) := \int_{0}^{\infty} f(Z(t) \mid Z(0)=i) \mathrm{d}{t}, \quad i >0.
\end{equation}
We assume throughout\footnote{Note that, on this finite state space, we need not assume that $f$ is nondecreasing for large arguments, as required for $(Z(t))_{t \geqslant 0}$ on $\NN_{\geq 0}$ in \cite{StefanovWang2000}.}
that $f(0)=0$; 

in particular, this implies that the integral has no contribution from beyond the time of absorption in 0.
The mathematical features of $\mathcal{I}_{i}(f)$, especially its expectation and  higher-order moments, have been well studied  \citep{puri1966homogeneous, puri1968some, jerwood1970note, mcneil1970integral, goel1974stochastic, norden1982distribution, SuarezChavez1999, StefanovWang2000, pollett2002path, pollett2003integrals, crawford2018computational, hobolth2019phase}. Here we restrict $(Z(t))_{t \geqslant 0}$ to $[N]_0$.  In Appendix \ref{sec:moment}, we outline  how to derive the moments of the path integral via first-step analysis; below we summarise the results.

For $j \in [N]_0$, let $\lambda_{j}$ and $\mu_{j}$, respectively, be the birth and death rates, that is, the rates for the transitions $j \to j+1$ and $j \to j-1$, in  $(Z(t))_{t \geqslant 0}$; we assume that $\lambda_{0}=0$ (since 0 is absorbing), $\lambda_j \geq 0$ for $j \in [N-1]$, and $\mu_j>0$ for $j \in [N]$, and complement this by the convention  $\mu_{0}= \lambda_N=0$. The expectation of $\mathcal{I}_{i}(f)$ in \eqref{def_path-integral} is then given by 
\begin{equation}\label{1st_moment_formula}
    \E{ \mathcal{I}_{i}(f)} = \sum_{j=1}^{N} \zeta_{ij} f(j),
\end{equation}
where, for $i,j\in [N]$,
\begin{equation}\label{t_ij_formula}
    \zeta_{ij} \defeq \sum_{\ell=1}^{\min\{i,j\}} \eta^{}_{\ell j},
\end{equation}
and, for $1 \leq \ell \leq j\leq N$,
\begin{equation}\label{s_ellj_formula}
    \eta^{}_{\ell j} \defeq
        \frac{\lambda_{\ell} \cdots \lambda_{j-1}}{\mu_{\ell} \cdots \mu_{j}}, 
\end{equation}
where the empty product is 1
(p.~162/163 in \citet{StefanovWang2000}; note that, for $\ell \leq j$, $\eta_{\ell j}$ equals their $H_{\ell}(j)/\mu_{\ell}$; see also \citet{SuarezChavez1999} for the special case of $i=1$, and see eq.~(1) of \citet{StefanovWang2000} for the special case $N=\infty$). 
Here, $\eta_{\ell j}$ represents the expected total sojourn time in  $j$ measured from the moment where the process reaches state $\ell$ for the first time until the moment where it reaches state $\ell-1$ for the first time; for lack of reference and convenience of the reader, we prove this fact in Appendix~\ref{sec:sojourn}. 
Consequently, $\zeta_{ij}$ is the expected total sojourn time of $(Z(t))_{t \geqslant 0}$ in   $j$ before the process is absorbed in 0, given it started  in $i$ \cite[Eq.~(4)]{StefanovWang2000}.
Generalising \eqref{1st_moment_formula}, the $m$-th moment of the path integral is given by
\begin{equation}\label{mth_moment_formula}
    \E{ \big( \mathcal{I}_{i}(f) \big)^m} = m! \sum_{j_{1}=1}^{N} \cdots \sum_{j_{m}=1}^{N} \zeta_{ij_{1}} \zeta_{j_{1} j_{2}} \cdots \zeta_{j_{m-1} j_{m}} f(j_{1}) \cdots f(j_{m}),
\end{equation}
see Appendix~B. 

\subsubsection{Moments of $L_n$}
Let us now apply the results above to our genealogies. With the choice  $(Z(t))_{t \geqslant 0} = (Y(\tau))_{\tau \geqslant 0}$, $f = \mathrm{id}$, and $i=n$,  we have $L_n = \cI_n(f)$.  With the transition rates of $(Y(\tau))_{\tau \geqslant 0}$ in \eqref{bdrates}, 
\eqref{s_ellj_formula}  evaluates to
%\ellen{removed case distinction; we only need this for $\ell \leq j$}
\begin{equation}\label{s_ellj_formula_SIS}
    \eta^{}_{\ell j} = 
        \frac{1}{uj} \left(\frac{s}{Nu}\right)^{j-\ell} \frac{(N-\ell)!}{(N-j)!} = \frac{1}{uj} \left(\frac{s}{Nu}\right)^{j-\ell} \fall{(N-\ell)}{j-\ell},  \quad  1 \leq \ell \leq j \leq N,
\end{equation}
and \eqref{t_ij_formula} becomes
\begin{equation}\label{t_ij_formula_SIS}
    \zeta_{nj} = \sum_{\ell=1}^{ \min\{n,j\} } \frac{1}{uj} \left(\frac{s}{Nu}\right)^{j-\ell} \fall{(N-\ell)}{j-\ell}.
\end{equation}
In Appendix \ref{sec]branch_length}, we show that, with the help of \eqref{t_ij_formula_SIS},  \eqref{1st_moment_formula} and \eqref{mth_moment_formula} evaluate to
\begin{equation}\label{ELn_culture_withd}
    \E{L_{n}} = \sum_{m=0}^{N-1} \frac{1}{u} \left(\frac{s}{Nu}\right)^{m} \frac{\fall{N}{m+1} - \fall{(N-n)}{m+1}}{m+1}
\end{equation}
and
\begin{equation}\label{ELn2_culture_withd}
    \begin{split}
        \E{L_{n}^{2}}
        &= 2 \sum_{m_{1}=0}^{N-1} \sum_{m_{2}=0}^{N-1} \left(\frac{1}{u}\right)^{2} \left(\frac{s}{Nu}\right)^{m_{1}+m_{2}} \\ & \qquad \times \left[ \frac{\{\fall{N}{m_{1}+1} - \fall{(N-n)}{m_{1}+1}\} \fall{N}{m_{2}+1}}{(m_{1}+1)(m_{2}+1)} - \frac{\fall{N}{m_{1}+m_{2}+2} - \fall{(N-n)}{m_{1}+m_{2}+2}}{(m_{1}+m_{2}+2)(m_{2}+1)} \right]. 
    \end{split}
\end{equation}

\subsubsection{Moments of $C_n$} \label{sec:momentsCn}
It is  now  an easy exercise to translate the moments of $L_n$ into the moments of $C_n$.
The $m$th \emph{factorial moment} of a random variable $W \sim \mathrm{Poi}(\nu)$ is  $\E{\fall{W}{m}}=\nu^{m}$; this is due to the simple fact that the probability generating function of $W$ is $g(z) = e^{\nu(z-1)}$, and $\E{W^{\underline{m}}}=g^{(m)}(1)$. By \eqref{Cn to Ln}, the $m$th factorial moment of $C_{n}$ thus becomes
\begin{equation}\label{EfallCkn}
        \E{ \fall{C_{n}}{m} } 
        = \EE \big [  \EE[\fall{C_{n}}{m} \mid L_{n} ] \big ]  
        = \EE [ (\mu L_n)^m] = \mu^{m} \E{ L_{n}^{m} }, \quad m >0.
\end{equation}
Together with the identity
\[
    x^{m} = \sum_{i=0}^{m} \StirlingTwo{m}{i} \fall{x}{i},
\]
where $\StirlingTwo{m}{i}$ is a Stirling number of the second kind, 
this gives the \emph{moments} of $C_n$ in terms of the moments of $L_n$ as
\begin{equation}\label{moments_general}
    \E{ C_{n}^{m} } = \sum_{i=0}^{m} \StirlingTwo{m}{i} \E{ \fall{C_{n}}{i} } = \sum_{i=0}^{m} \StirlingTwo{m}{i} \mu^{i} \E{ L_{n}^{i} }, \quad m >0.
\end{equation}
For example, the first four moments read
\begin{equation}\label{moments_Cn}
    \begin{split}
        \E{ C_{n} } &= \mu \E{ L_{n} }, \\
        \E{ C_{n}^{2} } &= \mu \E{ L_{n} } + \mu^{2} \E{ L_{n}^{2} }, \\
        \E{ C_{n}^{3} } &= \mu \E{ L_{n} } + 3 \mu^{2} \E{ L_{n}^{2} } + \mu^{3} \E{ L_{n}^{3} }, \quad \text{and} \\
        \E{ C_{n}^{4} } &= \mu \E{ L_{n} } + 7 \mu^{2} \E{ L_{n}^{2} } + 6 \mu^{3} \E{ L_{n}^{3} } + \mu^{4} \E{ L_{n}^{4} }.
    \end{split}
\end{equation}
In particular, the variance is
\begin{equation}\label{variance_Cn}
    \begin{split}
        \VV[C_{n}]
        &= \E{ C_{n}^{2} } - (\E{ C_{n} })^{2} \\
        &= \mu \E{ L_{n} } + \mu^{2} (\E{ L_{n}^{2} } - (\E{ L_{n} })^{2}) \\
        &= \mu \E{ L_{n} } + \mu^{2} \VV[L_{n}].
    \end{split}
\end{equation}
 The latter can be seen as an instance of the standard decomposition of the variance: $\Var{C_n} = \E{\Var{C_n \mid L_n}} + \Var{\E{C_n \mid L_n}}$. So the first term, $\E{\Var{C_n \mid L_n}}=\mu \E{ L_{n} }=\E{C_n}$,  contains the variability due to the innovation process, whereas the second term, $\Var{\E{C_n \mid L_n}}=\mu^{2} \Var{L_{n}}$, comes from the variance of the  length of the genealogy. 
Inserting \eqref{ELn_culture_withd} and \eqref{ELn2_culture_withd} into  \eqref{moments_Cn} and \eqref{variance_Cn}, we obtain the first two moments of  $C_{n}$ in  closed form.

For numerical and simulation results, we  work with  $u=1$ throughout, so  one time unit corresponds to the expected life time of an individual, that is, one generation. Figure~\ref{fig:changing_s} shows the two components of $\Var{C_N}$ and their ratio as functions of $s$. First, the solid line $\mu \E{L_N} = \EE[C_N]$ illustrates the (smooth) transition from the subcritical phase with a small number of traits to the supercritical phase, where the mean number of traits increases steeply with $s$, as previously observed in similar models \citep{NakamuraWakanoAokiKobayashi2020,KobayashiWakanoOhtsuki2018,KobayashiKurokawaIshiiWakano2021}. According to the variance decomposition above, $\mu \E{L_N}$ is, at the same time, the variability of $C_N$ due to innovations. The figure  shows that,  for small values of $s$, it is the (by far) dominating contribution to $\VV[C_N]$; but the proportion decreases with $s$ and, for $s>1$, $\mu^2 \VV[L_N]$, the variability of $C_N$ due to the length of the genealogy, is the major contribution to $\VV[C_N]$. The picture is very similar for the variance components of $C_1$ (not shown).

\begin{figure}
    \centering   \includegraphics[width=80mm]{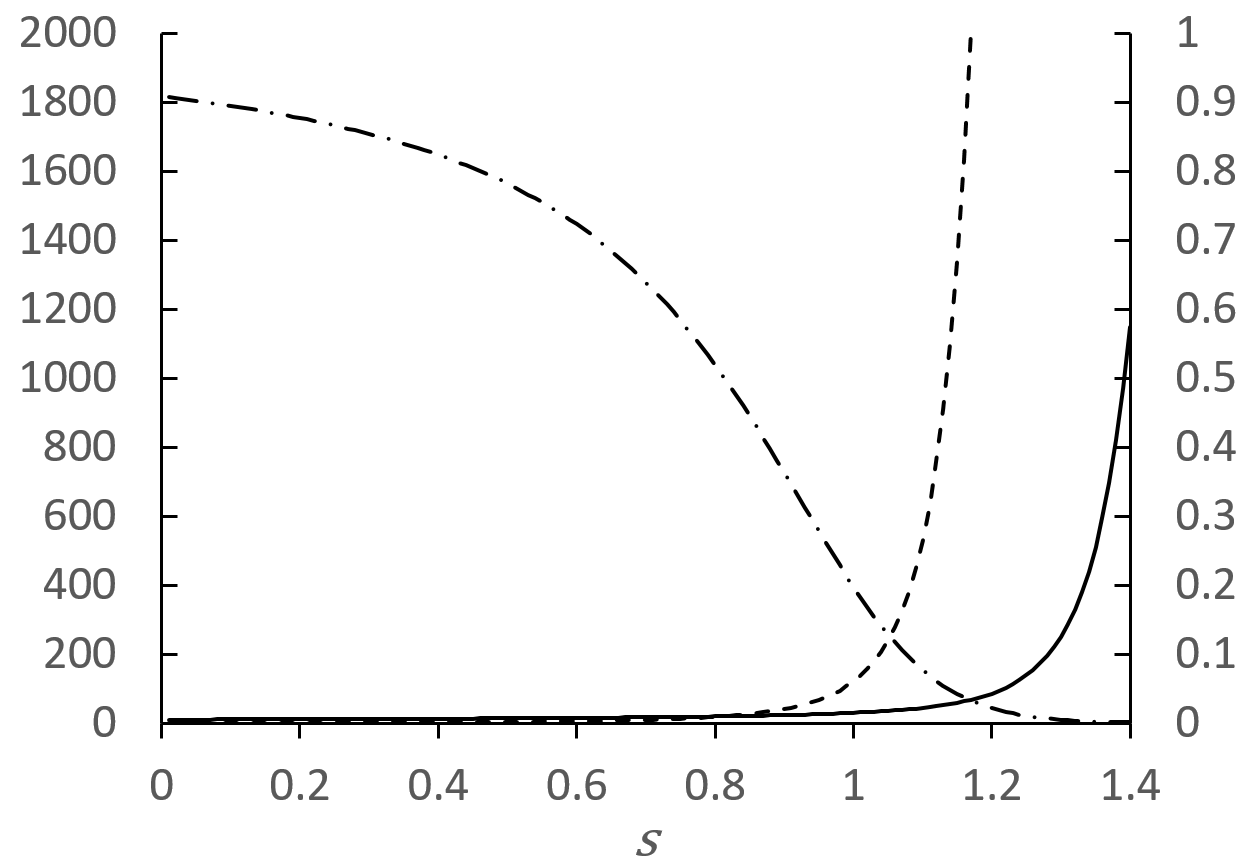}
    \caption{
    The variance components $\mu \E{L_N}$ (solid, left scale) and $\mu^2 \Var{L_N}$ (dashed, left scale), along with the proportion of the former, $\mu \E{L_N} / \Var{C_N}$
    (dot-dashed, right scale) as functions of $s$. $N=100,\mu=0.1,u=1$. 
}
   \label{fig:changing_s}
\end{figure}

Figure \ref{fig:comparison} shows $\E{C_n}$ and $\sqrt{\Var{C_n}}$ for fixed $s$ as functions of $n$ in both the sub- and the supercritical case. 
The most important difference is that, in the supercritical case, $\E{C_n}$  saturates for relatively small $n$; that is, almost all traits of the population are already contained in a small sample, a fact that will turn out as crucial in the next section.
Of course, in addition,  $\E{C_n}$ and $\sqrt{\Var{C_n}}$  are altogether substantially higher in the supercritical case, as we have already seen for $n=N$ in Figure \ref{fig:changing_s}. 

\begin{figure}
    \centering
    \includegraphics[width=60mm]{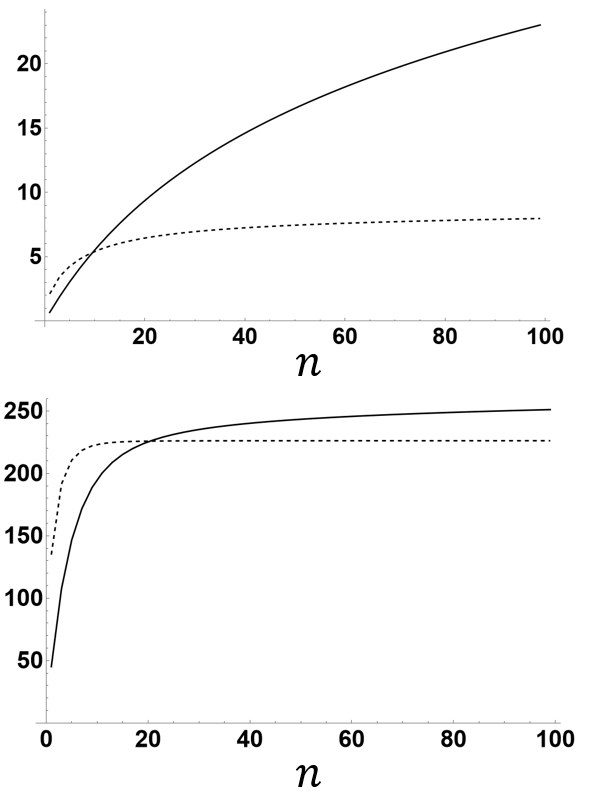}
    \caption{$\E{C_n}$ (solid) and $\sqrt{\Var{C_n}}$ (broken) as functions of $n$ in the subcritical case ($s=0.9$, upper panel) and the supercritical case ($s=1.3$, lower panel).  The graphs of $\E{L_n}$ and $\sqrt{\Var{L_n}}$ are not shown since they are almost indistinguishable from those of $\E{C_n}$ and $\sqrt{\Var{C_n}}$ except for the scaling factor of $\mu$. $N=100,u=1,\mu=0.1$.
    }
    \label{fig:comparison}
\end{figure}

\subsection{Marginal and joint distributions of  $L_1$ and $L_N$, and of $C_1$ and $C_N$}

Figure~\ref{fig:loghistogramL} shows the histograms of $\log_{10} \ell_1$ and $\log_{10} \ell_N$ (where $\ell_n$ is the realisation of $L_n$) obtained via \eqref{Ln_in_path-integral}  by simulating the untyped ALG $(\Lambda(\tau))_{\tau \geq 0}$ following  \eqref{eq:lambdaTransitions} until extinction, a large number of times.

\begin{figure}
    \centering
    \includegraphics[width=80mm]{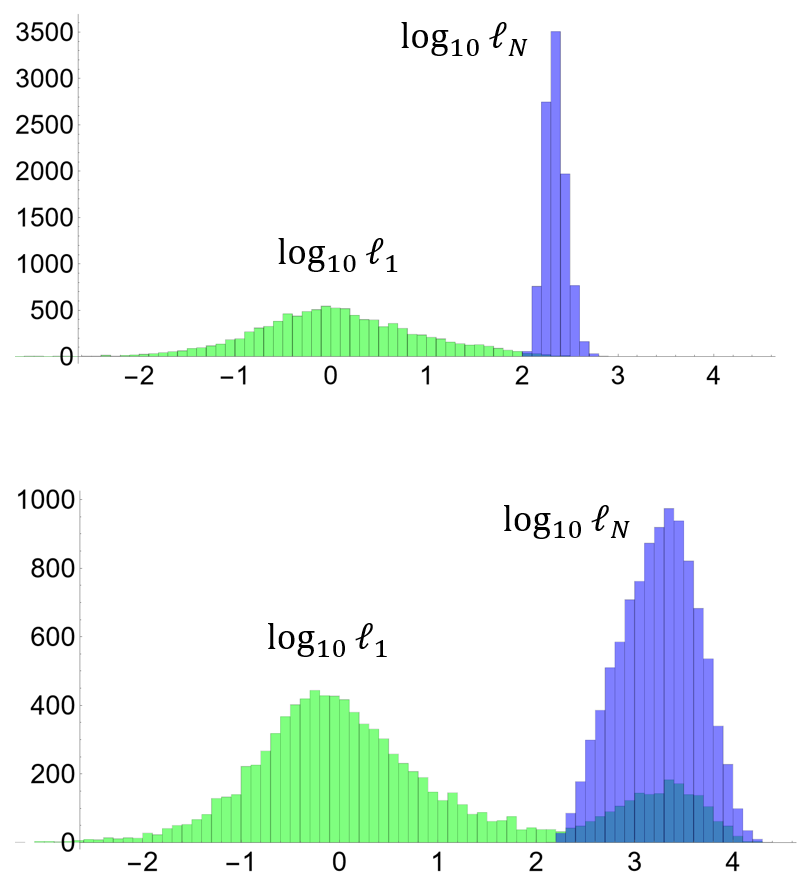}
    \caption{Histograms of $\log_{10}\ell_1$ (green) and $\log_{10}\ell_N$ (blue), obtained from $10^4$ simulation runs of the untyped ALG; the overlap area of the two histograms appears blue-green. In each run, the ancestors were traced until they disappeared. 
    Top: $s=0.9$; bottom: $s=1.3$.
    $N=100, u=1$.}
    \label{fig:loghistogramL}
\end{figure}

In the subcritical case, the histograms simply tell us that single individuals tend to have short ancestries, whereas the ancestries of the entire population are much longer (note the log scale). In  the supercritical case, the values of $\ell_N$ become much larger, and the distribution of $\ell_1$ is bimodal: it has a large peak for small values, where most of the mass is located,   and  a second smaller peak  where the distribution of $\ell_N$ is located. So most individuals have short genealogies, but occasionally they become as long as the genealogy of the entire population.

Recalling now that $C_n \sim \text{Poi}(\mu L_n)$, we turn the histograms of $\ell_1$ and $\ell_N$ into those of $c_1$ and $c_N$,  where $c_n$ is a realisation of $C_n$, see Figure \ref{fig:loghistogramC_ALG}. A qualitative difference appears only for the left peaks: the left peaks of the histograms of $\ell_1$ translate into the left peaks of the histograms of  $c_1$ in that short genealogies mean high chances for no or only a very small number of innovation events, resulting in a sharp peak of (nearly) ignorant individuals. So the histograms show that, in the supercritical case, most individuals are (almost) ignorant, but some of them carry nearly all the  knowledge of the entire population.

This becomes even clearer when we consider the joint distribution of  $\ell_1$ and $\ell_N$ in
heat maps, and likewise for $c_1$ and $c_N$ (Fig.~\ref{fig:heatmapLandC}). In the subcritical case, the correlation between $\ell_1$ and $\ell_N$ is altogether weak;  $\ell_1$ is bimodal in the rare case that $\ell_N$ is relatively large. A similar tendency exists between $\ell_2$ and $\ell_N$ (not shown).
In contrast,  the correlation is very strong in the supercritical case.  When $\ell_N$ is not too small, the distribution of $\ell_1$ is bimodal, and there are two extreme types of individuals: those with very short genealogies and those whose genealogies are nearly as long as those of the entire population; there are only very few intermediate cases. An analogous  observation applies to $\ell_2$ and $\ell_N$ (not shown).

Unsurprisingly, a very similar picture emerges in the joint distribution of $c_1$ and $c_N$. The only difference is some additional noise caused by the Poisson random variable translating $\ell_n$ into $c_n$.

\begin{figure}
    \centering
    \includegraphics[width=80mm]{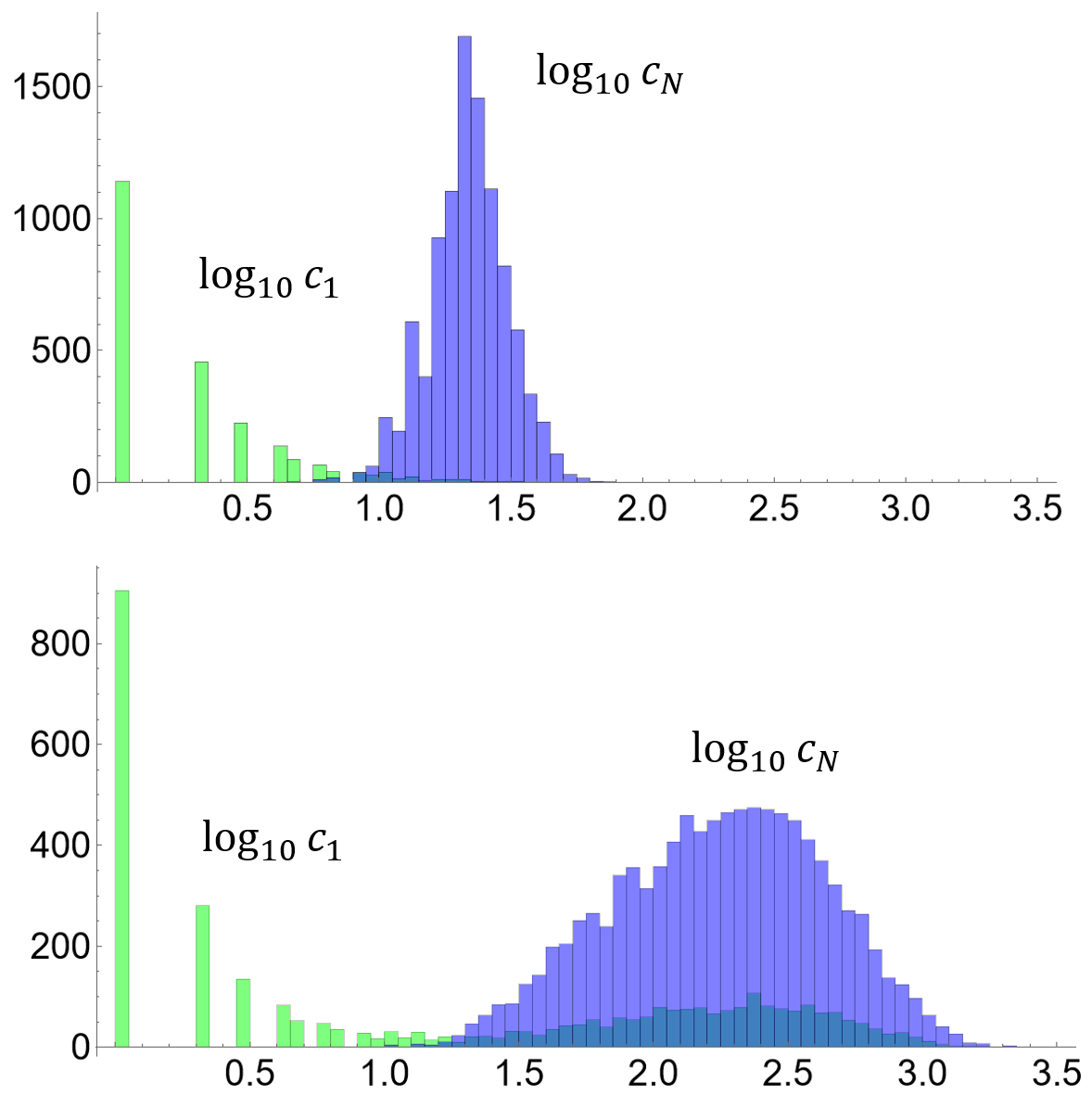}
    \caption{Histograms of $\log_{10} c_1$ (green) and $\log_{10} c_N$ (blue), where $c_1$ is a random number drawn from Poi$(\mu \ell_1)$ and likewise for $c_N$, and the values of $\ell_1$ and $\ell_N$ are those from Figure~\ref{fig:loghistogramL}.
    Top: $s=0.9$. Bottom: $s=1.3$.
    $N=100, u=1$.}
    \label{fig:loghistogramC_ALG}
\end{figure}

\begin{figure}
    \centering
    \includegraphics[width=160mm]{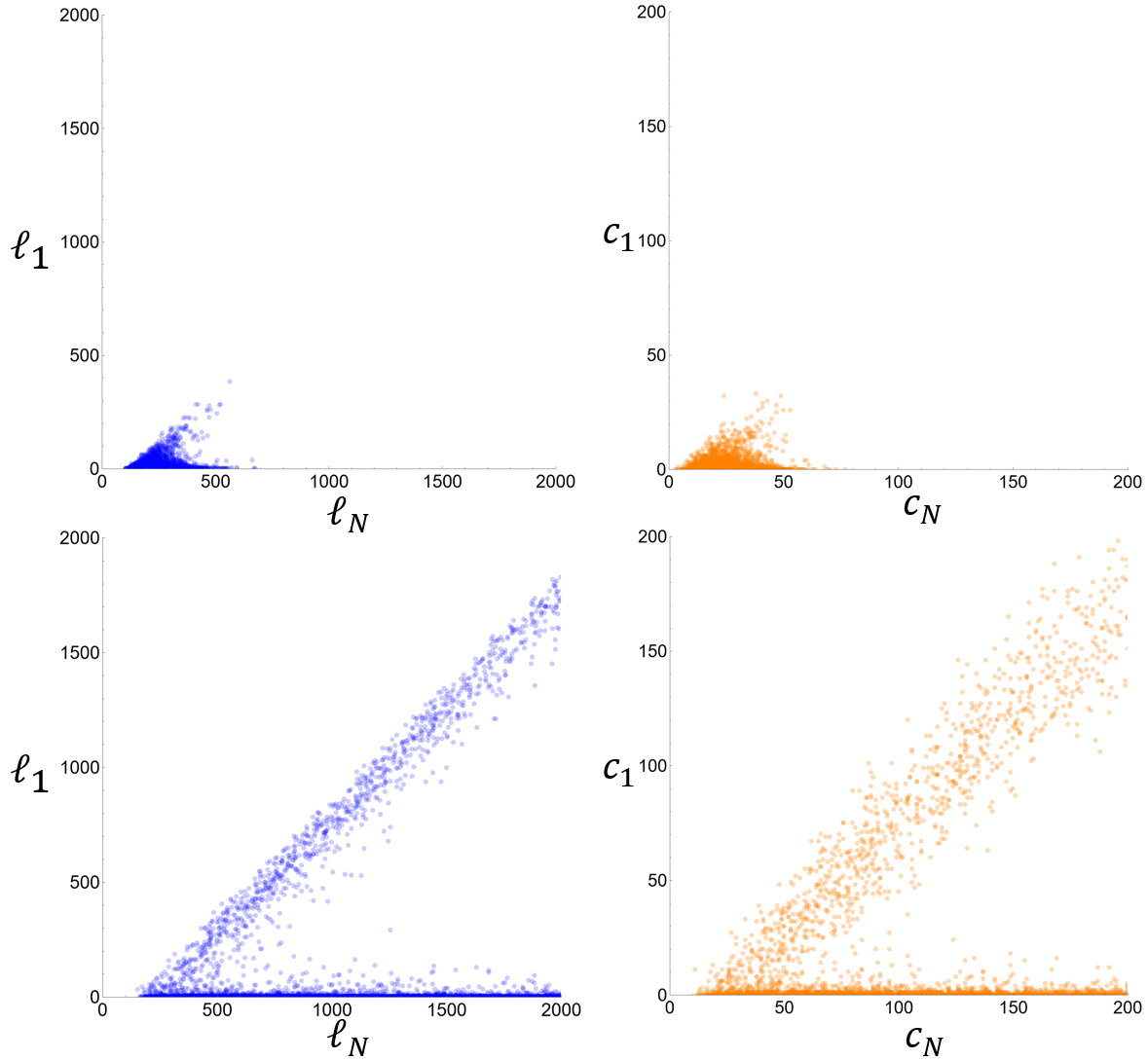}
    \caption{Heat maps showing the joint distributions of  $\ell_1$ and $\ell_N$ (blue) and of $c_1$ and $c_N$ (orange), where the values of $\ell_1, \ell_N, c_1$, and $c_N$ are those from Figures~\ref{fig:loghistogramL} and \ref{fig:loghistogramC_ALG}.
    Top: $s=0.9$, bottom: $s=1.3$.
    $N=100, u=1$.
    }
    \label{fig:heatmapLandC}
\end{figure}

\subsection{Metastability and merging in the ALG}
\label{subsec:metastability}
Still, we are at the descriptive level. For an understanding of the behaviour, we must consider the \emph{dynamics} of the ALG. As mentioned above, the line-counting process $(Y_n(\tau))_{\tau \geq 0}$ dies out with probability 1; but it is decisive what happens before extinction. Let us briefly recapitulate the essentials following \citet{AnderssonDjehiche1998} and \citet{Foxall2021}. In the subcritical case, $(Y_n(\tau))_{\tau \geq 0}$ dies out quickly almost surely. In contrast, in the supercritical regime, it has a metastable state around $N\bar \xi$, where  $\bar \xi$ is the stable equilibrium of the differential equation \eqref{ode_xi}. Note  that $\lambda_{\lfloor N \bar \xi\rfloor} \approx \mu_{\lfloor N \bar \xi\rfloor}$; also recall from Section~\ref{subsec:LLN} that the qualitatively different behaviour in the sub- and supercritical regimes reflects the transcritical bifurcation of the equilibria of the differential equation.

For $(Y_1(t))_{\tau \geq 0}$ in the supercritical case, there is the following dichotomy: with probability $u/s$, $(Y_1(\tau))_{\tau \geq 0}$ goes extinct quickly; otherwise, it grows to reach the metastable state (say $\lfloor N \bar \xi\rfloor$ for definiteness) in a short time. $(Y_N(\tau))_{\tau \geq 0}$ always moves to $\lfloor N \bar \xi\rfloor$ quickly. By a  crude approximation, the expected first-passage times from 1 to $\lfloor N \bar \xi\rfloor$ (conditional on non-extinction before reaching $\lfloor N \bar \xi\rfloor$), and from $N$ to $\lfloor N \bar \xi\rfloor$, are both bounded by $N^2/u$ \cite[Proof of Lemma 3]{AnderssonDjehiche1998}\footnote{The factor of $1/u$ comes from the fact that \cite{AnderssonDjehiche1998} work with $u=1$.}. (According to \cite{AnderssonDjehiche1998}, it can  be shown with the help of refined arguments that the bound is actually $\cO(\log(N))$.)

Whenever the process has reached the metastable state, it fluctuates around it for a long time with fluctuations of order $\sqrt{N}$, before finally going extinct in a rare and rather sudden event. See the comprehensive work of \citet{Foxall2021} for the details. The expectation of the extinction time $T$ when starting from the metastable state, or, more generally, from any initial value $\cO(N)$, in a population of size $N$ is given by
\begin{equation}\label{ET}
    \E{T} 
    = \frac{s}{(s-u)^2} \sqrt{\frac{2 \pi}{N} } e^{N \left\{ \log (s/u) + (u/s) -1 \right\} }  (1+\scO(1)), 
\end{equation}
and the distribution of $T/\E{T}$ converges to an exponential distribution with parameter 1,
see  \citet{AnderssonDjehiche1998,doering2005}; \citet[eq. (12.2)]{Nasell2011}, and \citet{Foxall2021}. Since $\log(x) + 1/x \geq 1$ for $x \geq 1$ with  equality if and only if $x=1$ (note that $\log(1)+(1/1)-1=0$ and $(\dd/\dd x) (\log(x)+(1/x) -1)=(1/x) - (1/x^2) > 0$ for $x>1$), \eqref{ET} means that the time the process spends in the metastable regime increases exponentially with $N$. Our previous attributes such as `quickly' and `in a short time' are actually meant relative to this exponential time scale.

But the line-counting process alone does not suffice to understand what is really going on; rather, one has to consider the dynamics of the ALG, more precisely, the processes $(\Lambda_\alpha(\tau))_{\tau \geq 0}$ for $\alpha \in [N]$ and $(\Lambda_{[N]}(\tau))_{\tau \geq 0}$ of \eqref{eq:lambdaTransitions}. In line with the dichotomy described above, the simulations in Figure \ref{fig:numAncestors} show that the ancestors of a single individual either die out quickly or grow to a metastable number and survive for a long time; likewise, the number of ancestors of the entire population reaches the metastable regime quickly. But we also see that, soon after both $(\Lambda_\alpha(\tau))_{\tau \geq 0}$ and $(\Lambda_{[N]}(\tau))_{\tau \geq 0}$ have reached a metastable size,  \emph{they become identical}.  This is  due to the fact that $\Lambda_\alpha(\tau) \subseteq \Lambda_{[N]}(\tau)$ for all $\tau \geq 0$, and the sizes of the two sets  perform fluctuations around $N \bar \xi$ of order $\cO(1/\sqrt{N})$, in the course of which they finally meet and hence become identical. From this point onwards, we have $\Lambda_\alpha(\tau)=\Lambda_{[N]}(\tau)$ due to \eqref{identical_sets}.

The merging events\footnote{Note that these merging events  differ from the coalescence events in the ASG, where  two lines unite into one.} responsible for this happen whenever, for some $\alpha$, a line in $\Lambda_\alpha(\tau)$ learns from a line in $\Lambda_{[N]}(\tau)\setminus \Lambda_\alpha(\tau)$: then, a parent from $\Lambda_{[N]}(\tau)\setminus \Lambda_\alpha(\tau)$ is added to $\Lambda_\alpha(\tau)$ and hence removed from $\Lambda_{[N]}(\tau)\setminus \Lambda_\alpha(\tau)$. A transition in the reverse direction (that is, an addition of elements to $\Lambda_{[N]}(\tau)\setminus \Lambda_\alpha(\tau)$) is not possible: when a line in $\Lambda_{[N]}(\tau)\setminus \Lambda_\alpha(\tau)$ learns from a line in $\Lambda_\alpha(\tau)$, neither $\Lambda_\alpha(\tau)$ nor $\Lambda_{[N]}(\tau)$  will change.

\begin{figure}
    \centering    \includegraphics[width=160mm]{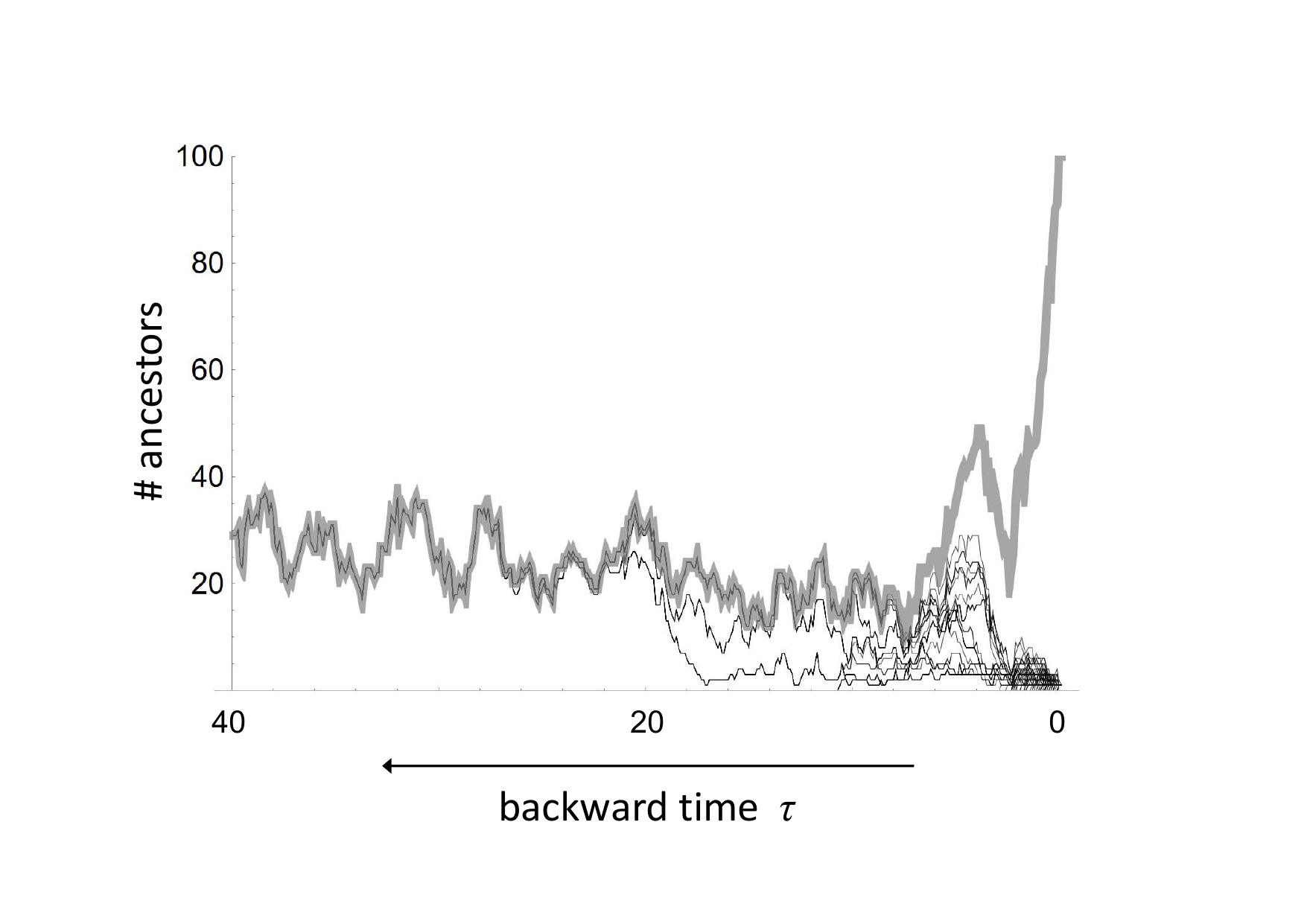}
    \caption{Simulation of the untyped ALG. $\lvert (\Lambda_{[N]}(\tau))_{\tau \geq 0}\rvert$, the number of ancestors of the total population over time (thick grey line), and the $\lvert (\Lambda_\alpha(\tau))_{\tau \geq 0}\rvert$, the sizes of the ancestries of each individual $\alpha$ (100 thin black lines for $\alpha \in [N]$) are shown from backward time $\tau$=0 to $\tau=40$, all based on the same realisation of $(\Lambda(\tau))_{\tau \geq 0}$. By definition, the thick grey line starts at $N$, and each thin black line starts at 1.  When a black line merges into the grey line (26 occurrences among 100, which is close to the probability of $1-u/s=0.23$), the corresponding sets become identical and behave identically from then on. $N=100, s=1.3, u=1$.
    }
    \label{fig:numAncestors}
\end{figure}

Nevertheless, this process is not one-dimensional:  there are also death events in $\Lambda_\alpha(\tau)$ and $\Lambda_{[N]}(\tau) \setminus \Lambda_\alpha(\tau)$, as well as learning events from parents outside $\Lambda_N(\tau)$, by which either of the two sets increases or decreases individually (these types of events are responsible for the fluctuations around the metastable state).

All this explains the stationary behaviour illustrated in Figure~\ref{fig:heatmapLandC}. In the subcritical case, the ancestries of all individuals die out quickly without much of a chance to merge, so most of them remain short and more or less independent.  This changes in the supercritical case.  If the ancestry of a given individual dies out quickly, it will again remain short and independent of $L_N$; this corresponds to the concentration of points close to the horizontal axis in the  corresponding heat map. But if the number of ancestors moves to the metastable state, it will stay there for a long time, so the individual has old ancestors. Moreover, due to (relatively fast) merging, the set of old ancestors, and thus the set of traits, is largely shared with the entire population. This corresponds to the cloud of points close to the diagonal. 
As a consequence, cultural traits that are old but  not too old to have gone extinct are carried by all knowledgeable individuals. So, for old traits that still exist in a population, there are  two main types of individuals: those who know none of them and those who know all of them. Altogether, the trait diversity between individuals is low.

Let us mention that a  fast merging of  ancestral sets  similar to that observed above was described, and actually proved, for the minimal-load ASGs in a model of Muller's ratchet with tournament selection \cite[Sec.~7]{GCSmadiWakolbinger23}. Like the ALG, the minimal-load ASG experiences branching and pruning events; but it additionally has coalescence events, which  are crucial for the proof, so the latter does not carry over to our situation.

%%%%%%%%%%%%%%%%%   3.2  

\subsection{Forward dynamics and evolving genealogies}\label{subsec:simulations}

So far, we have considered the stationary distributions of $L_n$ and $C_n$. Let us  now return to the dynamics $(L_n^t)_{t \geq 0}$ and $(C_n(t))_{t \geq 0}$; recall that $(L_n^t)_{t \geq 0}$ is the length of the evolving genealogy, and the upper index refers to the time where we start looking back.

%%%%%   3.2.1

\subsubsection{Forward simulations}
We carried out individual-based simulations of the forward process $(\Phi(t))_{t \in [0,t_{\rm{max}}]}$  described in Section \ref{subsec:FullForward}.  
We start with $k_i(0)=\varnothing$ for every $i\in [N]$ and compute realisations $c_n(t)$ of $C_n(t)$ at $t=0,1,...,t_{\rm{max}}$. We take  $t_{\rm{max}}$ large enough so that the effect of the initial state can be neglected and denote by $[t_{\rm{max}}]_0$ the set $\{0,1,\ldots,t_{\rm{max}}\}$. 

%%%%% 3.2.2

\subsubsection{Evolving genealogies}\label{sec:EvolvingGenealogies}

We are now ready to consider $L_n^t$, the length of the genealogy as a function of the forward time $t$ where we start looking back, as defined in \eqref{Ln_in_path-integral}; and we understand it as the process $(L_n^t)_{t \geq 0}$   
coupled across all times via the graphical representation, as illustrated in Figure~\ref{fig:Lnt_ver2}. The same applies, of course, to the $(\Lambda^{t}_{[n]}(\tau))_{\tau \geq 0}$.   We clearly have
\begin{equation}
\Lambda^t_{[n]}(0) = [n].
\end{equation}
As before, the set $[n]$ is representative of any sample of size $n$ due to exchangeability. The coupling across times extends to the extinction times  $T^t$  of $(\Lambda^{t}_{[N]}(\tau))_{\tau \geq 0}$;  the $T^t$ are all identically distributed in the same way as $T$ of \eqref{ET}, but they are, in general, not  independent. 

All this leads us to the concept of \emph{evolving genealogies}, as previously studied in the context of Kingman's coalescent process (that is, with coalescence, but without branching events) by \cite{PfaffelhuberWakolbinger2006} and \cite{PfaffelhuberWakolbinger2011}.  In practice, we have extracted the evolving genealogies from a single long simulation run of the forward model, where we have stored  all events (including  the information about who died and who learned from whom) and the times at which they happen.  This way, we obtained  realisations of
$(\Lambda^t_{n})_{t \geq 0}$ and $(L^t_{n})_{t \geq 0}$ coupled across all times. We will now discuss them together with the realisations of $(C_{n}(t))_{t \geq 0}$ extracted directly from the forward simulation.

\begin{figure}
    \centering    \includegraphics[width = 140 mm]{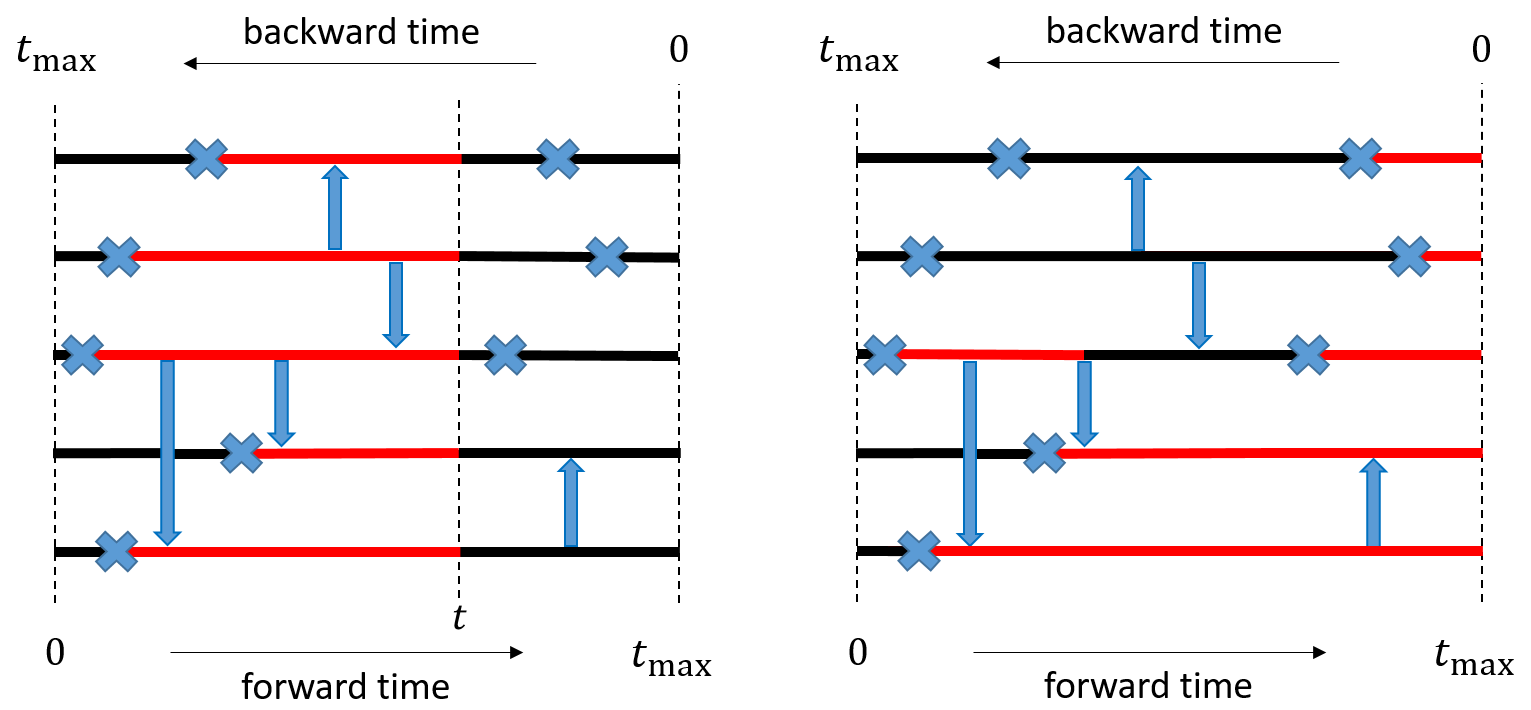}
    \caption{The total length $L_N^t$ of the  genealogy  of the entire population as a function of forward time $t$. The realisation of the graphical representation, as given by the positions and kinds of all events, is identical in both panels.  The top (bottom) panel shows a case with $0<t<t_{\rm{max}}$ (with $t=t_{\rm{max}}$). The sum of the lengths of the red line segments is  $L_N^t$.
    }
    \label{fig:Lnt_ver2}
\end{figure}

\subsubsection{Forward dynamics: length of genealogy and number of traits}

Long time series $(\mu \ell_N(t))_{t \geq 0}$, $(c_N(t))_{t \geq 0}$, $(\mu \ell_1(t))_{t \geq 0}$, and $(c_1(t))_{t \geq 0}$ are shown in Figures \ref{fig:FullTimeSeries_sub} and \ref{fig:FullTimeSeries_super} for the  subcritical   and the supercritical case, respectively.
More detail is to be seen in Figure \ref{fig:SampleTimeSeries}, which shows short cutouts of $(\ell_N(t))_{t \geq 0}$ and $(\ell_1(t))_{t \geq 0}$  in both cases. In the subcritical case, both $(\ell_1(t))_{t \geq 0}$ and $(\ell_N(t))_{t \geq 0}$ fluctuate moderately around some constant value in a more or less symmetric manner, and likewise for the traits. The number of traits seems to be largely independent of the length of the genealogies.

In contrast, in the supercritical case, the  time series $(\mu \ell_N^t)_{t \geq 0}$ and $(c_N(t))_{t \geq 0}$ are very close in our simulations, and likewise for $(\mu \ell_1^t)_{t \geq 0}$ and $(c_1(t))_{t \geq 0}$. 
Before we turn to the conspicuous shape of the supercritical dynamics, let us comment on the different degrees of coupling of $(\ell_n^t)_{t \geq 0}$ and $(c_n(t))_{t \geq 0}$ (for $n \in \{1,N\}$) in the sub- and supercritical cases. It reflects the variance components of $C_n$ in  Section~\ref{sec:momentsCn} and Figure~\ref{fig:changing_s}: recall that  the variance due to the innovation process is an appreciable proportion of $\VV[C_n]$ in the subcritical case, but is negligible for the supercritical case, so $\VV[C_n]$ is then largely governed by the variability of the length of the genealogy. This explains that $(\ell_n^t)_{t \geq 0}$ and $(c_n(t))_{t \geq 0}$ fluctuate more or less independently for $s<1$, but are closely coupled for $s>1$.

As to the dynamics in the supercritical case, both $(\ell_N(t))_{t \in [t \geq 0}$  and $(c_N(t))_{t \geq 0}$ display a characteristic sawtooth behaviour: they  increase more or less linearly (at roughly constant slope in every sawtooth) for some period, followed by a rapid collapse to a small value at a random time. We will find the reason for this in the next section. 

The dynamics of $(\ell_1(t))_{t \geq 0}$  and $(c_1(t))_{t \geq 0}$ in the supercritical case confirm the dichotomy observed in Section~\ref{subsec:metastability}: individual ancestries quickly either die out or move to the metastable state, so they are either very short or nearly as long as the genealogy of the entire population; as a consequence, individuals know `nearly nothing' or `nearly everything'. The dynamics yields the additional insight that the role of any given individual  (short or long genealogy) changes quickly, typically many times within a given sawtooth, see the lower panel of Figure \ref{fig:FullTimeSeries_super}; this happens when an individual with a short history is attached to an individual with a long history via a learning arrow; or, the other way around, if an individual with a long history dies.

As a consistency check, we have also produced the histograms of $(\ell_1(t))_{t \in [t_{\text{max}}]_0}$,  $(c_1(t))_{t \in [t_{\text{max}}]_0}$,  $(\ell_N(t))_{t \in [t_{\text{max}}]_0}$, and $(c_N(t))_{t \in [t_{\text{max}}]_0}$ resulting from our time series, and they are practically indistinguishable from the histograms in Figures~\ref{fig:loghistogramL} and \ref{fig:loghistogramC_ALG}, as it should be by stationarity and ergodicity.

\begin{figure}
 \centering    \includegraphics[width = 160mm]{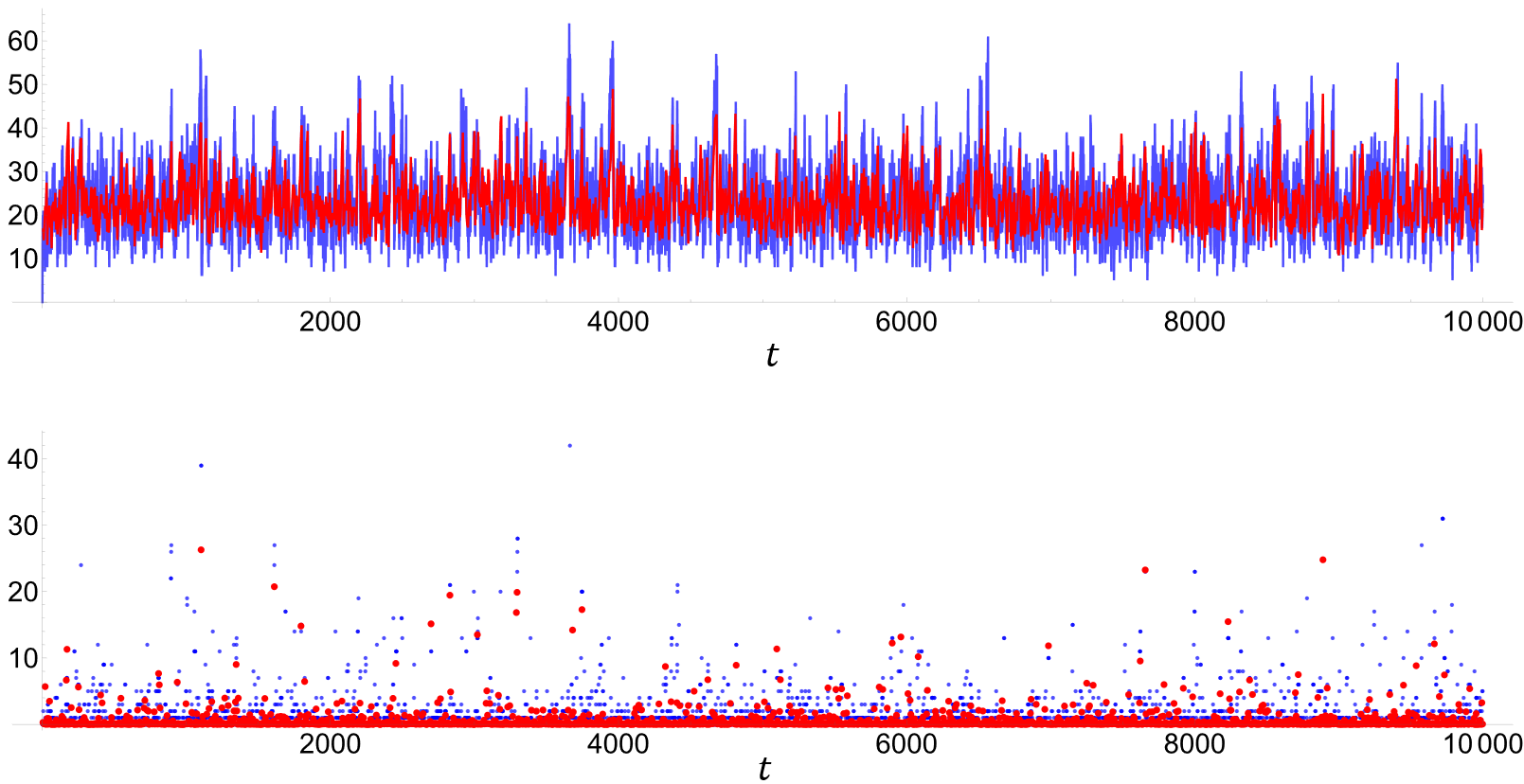}
 \caption{\label{fig:FullTimeSeries_sub}
 The innovation rate times the length of the genealogy (red)  and the number of traits (blue) as functions of forward time $t$ in the same realisation of a forward simulation. Top: $\mu (\ell_N^t)_{t \geq 0}$  and  $(c_N(t))_{t \geq 0}$. Bottom: $\mu (\ell_1^t)_{t \geq 0}$  and $(c_1(t))_{t \geq 0}$. In the bottom panel,  both time series oscillate quickly between near-0 and close to $\mu \ell_N^t$, so we did not interpolate between the points.
 $N=100, s=0.9, \mu=0.1, u=1$.}
\end{figure}

\begin{figure}
    \centering    \includegraphics[width=160mm]{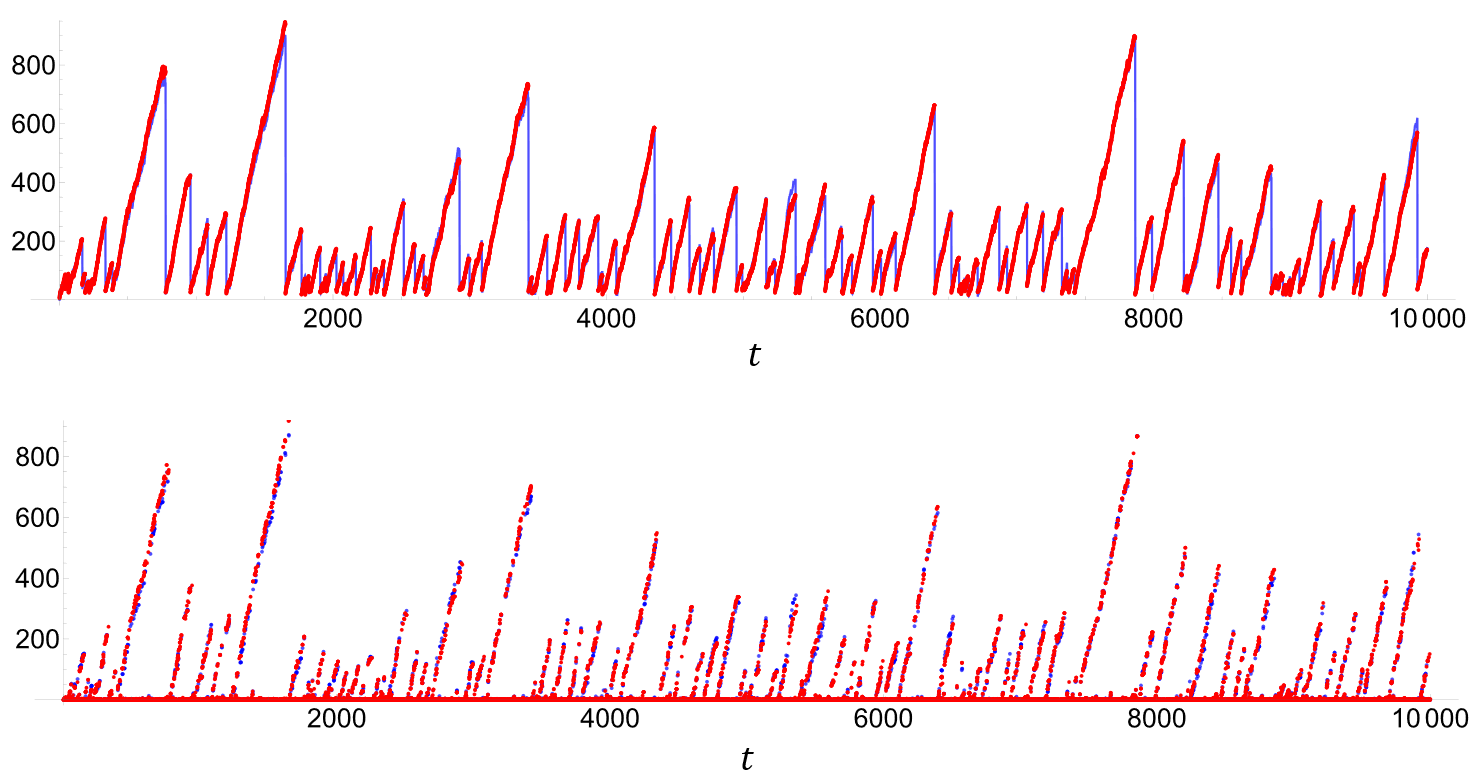}    \caption{\label{fig:FullTimeSeries_super}
    The innovation rate times the total length of the genealogy (red)  and the number of traits (blue) as functions of forward time $t$ in the same realisation of a forward simulation. Top: $\mu (\ell_N^t)_{t \geq 0}$ and  $(c_N(t))_{t \geq 0}$. To make the blue curve visible `below' the red, we did not interpolate between the red points. Bottom: $\mu (\ell_1^t)_{t \geq 0}$ and $(c_1(t))_{t \geq 0}$. Here, both time series oscillate quickly between near-0 and close to $\mu \ell_N^t$, so we did not interpolate between any points.
    $N=100, s=1.3, \mu=0.1, u=1$.
    } 
\end{figure}

\begin{figure}
    \centering
    \includegraphics[width=80mm]{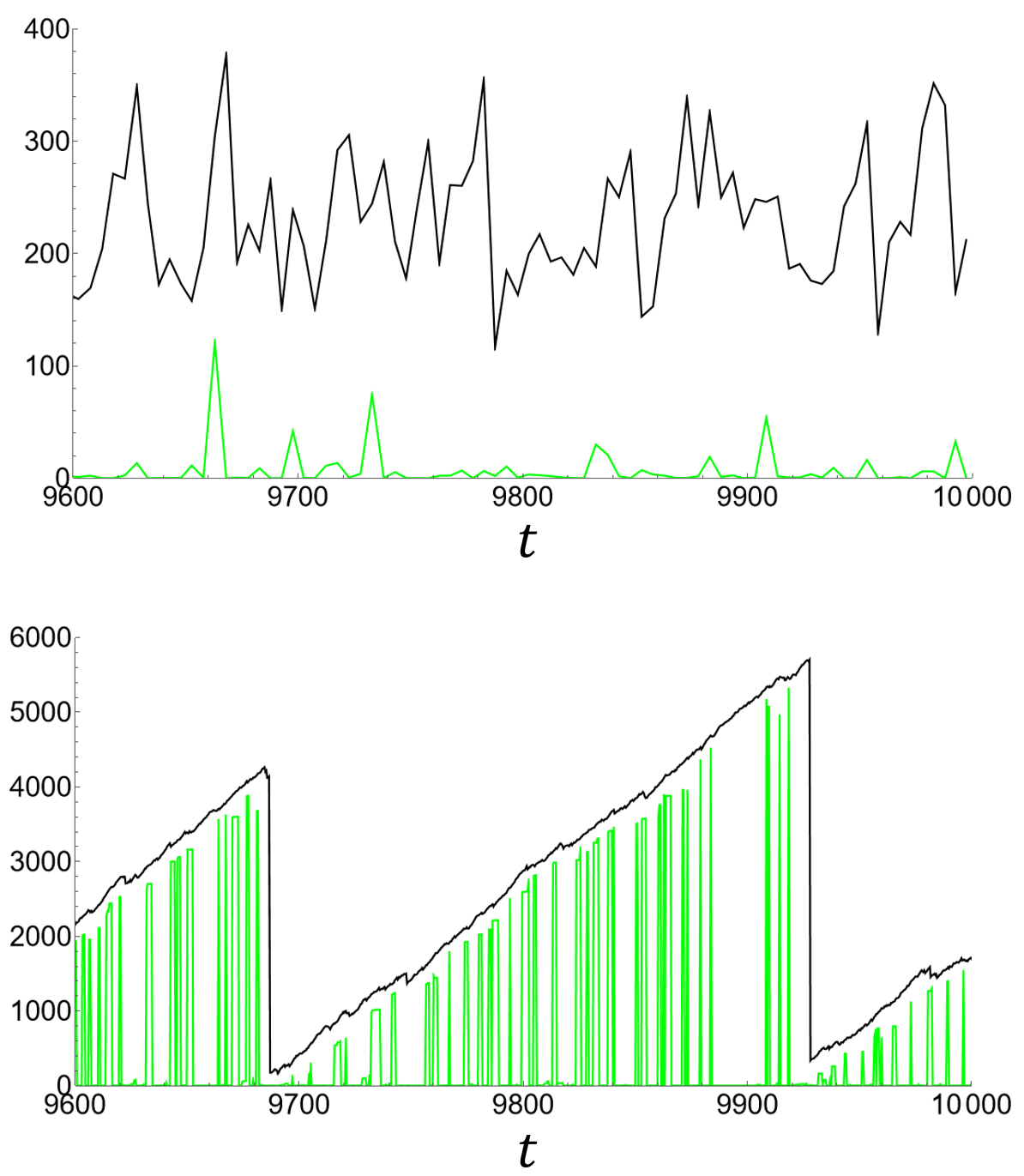}
    \caption{Time series $\big (\ell_1(t) \big )_{t \geq 0}$ (green) and $\big (\ell_N(t) \big )_{t \geq 0}$ (black) for the last 400 time units in a run with  $t_{\rm{max}}=10^4$ in the subcritical case (upper panel, $s=0.9$) and the supercritical case (lower panel, $s=1.3$).
    $N=100,\mu=0.1,u=1$.
    }
    \label{fig:SampleTimeSeries}
\end{figure}

\begin{figure}
    \centering    \includegraphics[width = \textwidth]{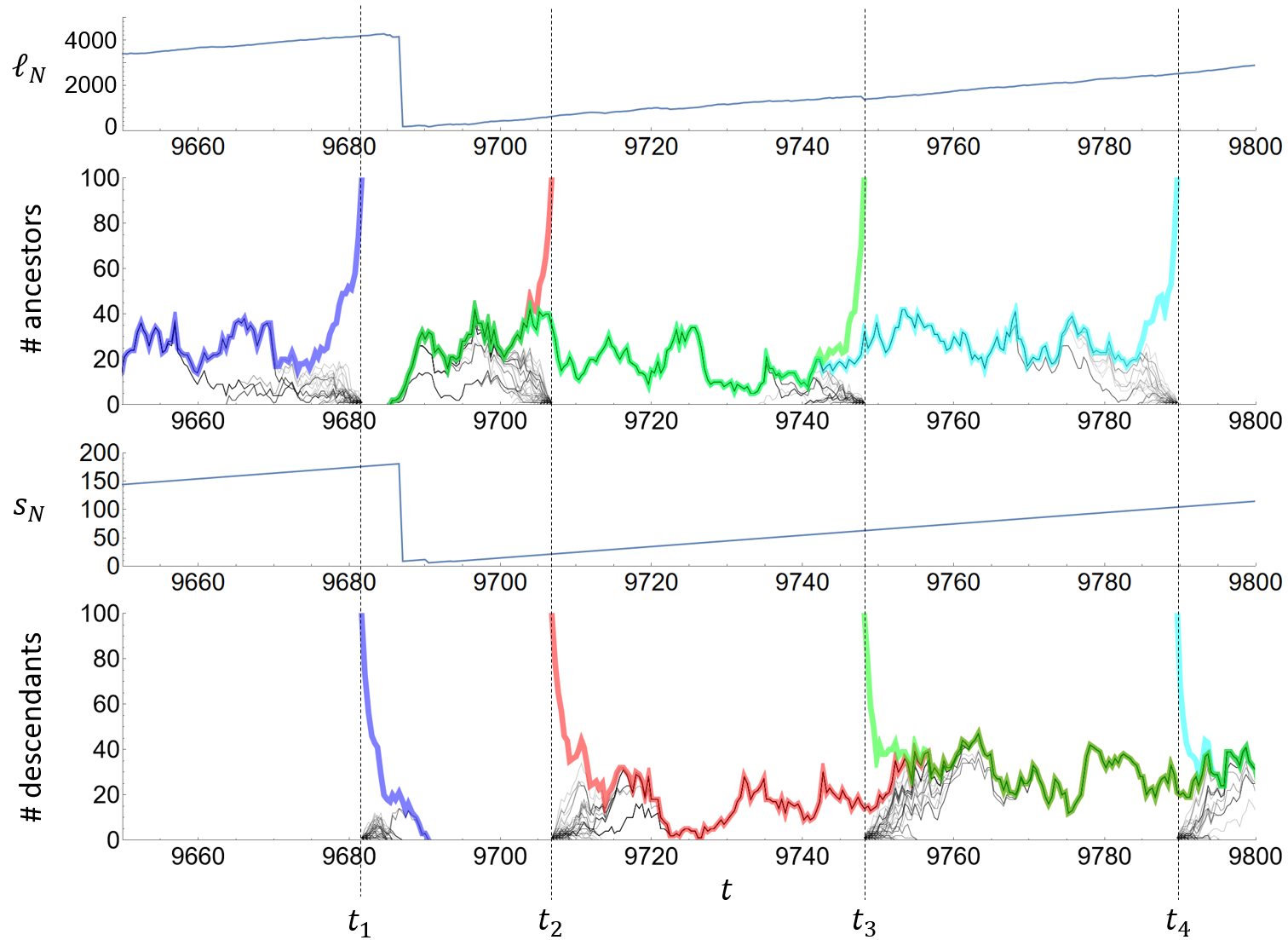}
    \caption{Relationship between (from top to bottom) the total  length of the genealogy, the set of ancestors, the time to the oldest ancestor, and the descendant process, all extracted from the same forward simulation, and all as a function of forward time $t$. In detail: the top panel shows a realisation $(\ell_N^t)_{t \geq 0}$ of $(L_N^t)_{t \geq 0}$ with a mass extinction (that is, an abrupt collapse to near zero) at $t\approx 9688$. The second panel displays the number of ancestors at $t \leq t_i$ ($i=1,2,3,4)$ of samples taken at forward times $t_1=9681, t_2=9706, t_3=9748$, and $t_4=9789$. More precisely, the thick coloured lines are realisations of the $\lvert \Lambda^{t_i}_{[N]}(t_i-t)\rvert_{t \in [t_i]_0}$, whereas the thin grey lines represent realisations of $\lvert \Lambda^{t_i}_{\alpha}(t_i-t)\rvert_{t \in [t_i]_0}$ for all $\alpha \in [N]$; these lines get darker when more lines overlap.
    The third panel shows the realisation $(s_N(t))_{t \geq 0}$ of $(S_N^t)_{t \geq 0}$, the time to the oldest ancestor.
    The bottom panel displays the number of descendants at time $t \geq t_i$  of samples taken at the forward times $t_i$. In analogy with the second panel, the thick coloured lines and the thin grey lines are realisations of the $\lvert \Gamma^{t_i}_{[N]}(t -t_i)\rvert_{t \geq t_i}$ and the  $\lvert \Gamma^{t_i}_{\alpha}(t-t_i)\rvert_{t \geq t_i}$, respectively. Looking backward (ancestral process) or forward (descendant process) in time, the thick coloured lines (along with the thin black ones that have merged into them) merge and fluctuate around their metastable value for a long time before going extinct. $N=100, u=1, s=1.3.$
    }
    \label{fig:numAnDe}
\end{figure}

\subsection{Understanding the sawtooth: ancestral sets and descendant process}

To understand the sawtooth behaviour observed in the supercritical case, that is, the  linear increase of $(L_N^t)_{t \geq 0}$ interrupted by sudden near-extinctions at  random times, we need  the finer picture of $\lvert \Lambda^t_{[n]}(\tau)\rvert_{\tau \geq 0}$ in Figure~\ref{fig:numAnDe}. 
The figure shows realisations of $\lvert \Lambda^{t_i}_{[N]}(t_i-t)\rvert_{t \in [t_i]_0}$ and $\lvert \Lambda^{t_i}_{\alpha}(t_i-t)\rvert_{t \in [t_i]_0}$ for $\alpha \in [N]$,
$i \in \{1,2,3,4\}$, and starting times $t_i$ chosen so that
\begin{equation}\label{t1234}
  t_1 \lesssim t_2^{\rm e} = t_3^{\rm e} = t_4^{\rm e} \ll t_2 \ll t_3 \ll t_4,
\end{equation} 
where $t_i^{\rm e}$ is the  time point where $( \Lambda^{t_i}_{[N]}(t_i-t))_{t \in [t_i]_0}$ 
goes extinct; so $t_i - t_i^{\rm e}$ is the corresponding realisation of $T^{t_i}$. 
Note that $T^{t_{2}},T^{t_{3}}$, and 
$T^{t_{4}}$ are coupled. With $\lesssim$ and $\ll$, we indicate that the quantities are close and not close to each other, respectively. 
We now see that the realisations of $(\Lambda^{t_3}_{[N]}(t_3-t))_{t \in [t_3]_0}$ and $(\Lambda^{t_2}_{[N]}(t_2-t))_{t \in [t_2]_0}$ quickly get close to, and actually become identical with, the realisation of $(\Lambda^{t_4}_{[N]}(t_4-t))_{t \in [t_4]_0}$; in particular, they join into the same metastable set and are extinguished with it. Likewise, the realisations of
$(\Lambda^{t_i}_{\alpha}(t_i-t))_{t \in [t_i]_0}$, $i \in \{2,3,4\}$ and $\alpha \in [N]$, either die out quickly or merge with the metastable set. This merging is clear by   arguments similar to those in Section~\ref{subsec:metastability}: since  $\Lambda^{t_4}_{[N]}(t_4-t_3) \subseteq \Lambda^{t_3}_{[N]}(0) =[N]$, we have $\Lambda^{t_4}_{[N]}(t_4-t_3+\tau) \subseteq \Lambda^{t_3}_{[N]}(\tau)$ for all $\tau>0$; so, by moving towards the metastable size, the latter process joins into  the former from above. Likewise, $\Lambda^{t_3}_{\alpha}(0) \subseteq  \Lambda^{t_3}_{[N]}(0)$, so $\Lambda^{t_3}_{\alpha}(\tau) \subseteq  \Lambda^{t_3}_{[N]}(\tau)$ for $\tau >0$; hence the former process, if it does not die out quickly, joins into the latter from below. Analogous arguments hold for the other time points.

Now, since $\lvert\Lambda^{t}_{[N]}(\tau)\rvert_{\tau \geq 0}$ spends most of its time alive near $N \bar \xi$,  \eqref{Ln_in_path-integral} tells us that   $L_N^t \approx N \bar \xi T^t$. In our specific realisation, we have $\ell_N^{t} \approx N \bar \xi (t - t_4^{\rm e})$ for any time $t_4^{\rm e} \lesssim t \lesssim t_4$ due to the identity of the $t_i^{\rm e}$, so we see a linear decrease (at rate $\approx N \bar \xi = N (1-u/s)$) with decreasing $t$. Restarting at $t_1$, in contrast, leads to a new metastable state, which awaits its extinction at $t_1^{\rm e}$, independently of $t_2^{\rm e}$. Moreover, since $t_1$ is close to $t_2^{\rm e}$, the value of $\ell_N^{t_1}$ is close to the maximum of the previous sawtooth. More generally, we are led to conjecture that $L_N^t$ decreases approximately linearly at rate $N \bar \xi$ for decreasing $t$ as long as $T^t \gg 0$ and then, after surpassing some small value, moves quickly to a new peak.

So far, we have followed the ancestral process backward in time. But the figure can also be read forward. Let $\Gamma_{[n]}^t(\varrho)$ be the set of descendants at time $t+\varrho$, $\varrho\geq 0$, of the set $[n]$ of individuals at forward time $t$. The process $(\Gamma_{[n]}^t(\varrho))_{\varrho \geq 0}$ has the same law as $(\Lambda_{[n]}^t(\tau))_{\tau \geq 0}$ with $\tau$ replaced by $\varrho$; this follows immediately from the fact that the set of descendants (ancestors) is obtained by following all learning arrows in the forward (backward) direction and pruning lines that meet a cross, and a reversal of all arrows does not change the law of the process.  As a side remark, let us note that  $\lvert \Gamma_{[n]}^0(t) \rvert_{t \geq 0}$ has the same law as $(X(t))_{t \geq 0}$ of our single-trait model with $X(0)=n$; this  reflects the self duality of the SIS model. In any case, like $\lvert\Lambda_{[N]}^t(\tau)\rvert_{\tau \geq 0}$,  also $\lvert\Gamma_{[N]}^t(\varrho)\rvert_{\varrho \geq 0}$ moves to the metastable state around $N \bar \xi$ quickly, and  like $\lvert\Lambda_{\alpha}^t(\tau)\rvert_{\tau \geq 0}$,  also $\lvert\Gamma_{\alpha}^t(\varrho)\rvert_{\varrho \geq 0}$ dies out quickly with probability $u/s$ and otherwise moves to the metastable state.

The important point now is that the two processes are connected via
\[
\alpha \in \Lambda_\beta^t(t-r) \Longleftrightarrow \beta \in \Gamma_\alpha^{r}(t-r), \quad \alpha, \beta \in [N], \; 0 \leq r \leq t,
\]
because if an individual $\alpha$ is ancestral to $\beta$, then $\beta$ is a descendant of $\alpha$. In particular, we have
\begin{equation}\label{LambdaGamma}
\Lambda_{[N]}^t(t-r) = \{\alpha : \Gamma_\alpha^{r} (t-r) \neq \varnothing \}.
\end{equation}
In general, $\Lambda_{[N]}^t(t-r) \neq \Gamma_{[N]}^{r}(t-r)$ realisationwise (although they are equal in distribution). However, it is true that 
\begin{equation}\label{LambdaGammavarnothing}
  \Lambda_{[N]}^{t}(t - r) = \varnothing \Longleftrightarrow \Gamma_{[N]}^{r}(t - r) = \varnothing
\end{equation}
for any $0 < r < t$, because if the population at time $t$ does not have ancestors  at time $r$, then the descendants of the population at time $r$ do not survive until $t$.
Moreover, if $t \gg r$ and the sets in \eqref{LambdaGammavarnothing} are not empty, we have
\begin{equation}\label{LambdaGammaNxi}
  \lvert\Lambda_{[N]}^{t}(t - r)\rvert \approx N \bar \xi \approx \lvert\Gamma_{[N]}^{r}(t - r) \rvert
\end{equation}
realisationwise due to metastability. 
In the realisation of Figure \ref{fig:numAnDe}, the equalities in \eqref{LambdaGammavarnothing} are true for $r = t_1$, $t = t_2$; that is, both processes go extinct in $[t_1,t_2]$. In contrast, for $r=t_2$ and $t = t_3$ or $t = t_4$,  \eqref{LambdaGammaNxi} applies.

Now, \eqref{LambdaGammavarnothing} and \eqref{LambdaGammaNxi}   together allow us to approximate $L_N^t$ of \eqref{Ln_in_path-integral} as
\begin{equation}\label{LNlin}
L_N^t \approx \int_{- \infty}^{t} \lvert \Gamma_{[N]}^{r}(t-r) \rvert \dd r 
= \int_0^{\infty} \lvert \Gamma_{[N]}^{t-\varrho}(\varrho) \rvert \dd \varrho
= \int_0^{S^t} \lvert \Gamma_{[N]}^{t-\varrho}(\varrho) \rvert \dd \varrho  \approx  N \bar \xi S^t,
\end{equation}
where $S^t$ denotes the largest $\varrho \leq t$ such that $\Gamma_{[N]}^{t-\varrho}(\varrho) \neq \varnothing$, or, equivalently, the smallest $\varrho \leq t$ such that $\Gamma_{[N]}^{(t-\varrho)-}(\varrho) = \varnothing$. 
This is the time\footnote{Note that $(\Gamma_{[n]}^t(\varrho))_{\varrho \geq 0}$ is based on the forward process, so it is c\`{a}dl\`{a}g; that is, a jump at time $t$ means that the `old' state applies until time $t-$ (the moment `just before' $t$), and at time $t$, the process is already in the `new' state. This implies that the ancestors are extinct at $(t-S^t)-$, but alive at $t-S^t$.}
since the last extinction event before $t$. 
In other words, the population at forward time $t$ has their oldest ancestor(s) at forward time $t-S^t$.
Since the extinction times are distinct random points on the time axis, $(S^t)_{t \geq 0}$ is a sawtooth function that increases linearly with slope 1 and is reset to some small value at random times.
So \eqref{LNlin} explains the linear increase in forward time, interrupted by collapses at the jumps of $(S^t)_{t \geq 0}$. 

Since $(\Gamma_{[N]}^t(\varrho))_{\varrho \geq 0}$ has the same law as $(\Lambda_{[N]}^t(\tau))_{\tau \geq 0}$, it is clear that $S^t$ has the same law as $T^t$, namely the exponential distribution with expectation $\EE[T]$ of \eqref{ET}. Eq.~\eqref{LNlin} therefore implies that $\EE[L_N] \approx N \bar \xi \EE[T]$. As a consistency check, we compared this numerically with the exact value of $\EE[L_N]$ of \eqref{ELn_culture_withd} for our parameter values $u=1,N=100$, and $s \in [0,1.4]$ and found good agreement (not shown).

We can now also better understand the dynamics of $(L_1^t)_{t \geq 0}$ and  $(C_1(t))_{t \geq 0}$  in the lower panel of Figures~\ref{fig:FullTimeSeries_super} and \ref{fig:SampleTimeSeries}. Since, for any given $\alpha$ and $t$, $(\Lambda^{t}_{\alpha}(\tau))_{\tau \geq 0}$ quickly either dies out  or gets close to $(\Lambda^{t}_{N}(\tau))_{\tau \geq 0}$, it is clear that, most of the time, $(L_1^t)_{t \geq 0}$ is either close to 0 or close to $(L_N^t)_{t \geq 0}$. The frequent, random transitions between the two possibilities come from the fact that, at the metastable state over time,  the actual set of lines that has old ancestors and thus ample knowledge, while being approximately constant in size, moves around in the population via the events of teaching individuals outside the current set, and by death events, thus rapidly including or excluding  individuals. That is, by learning from someone with old ancestors, an individual  acquires the entire ancestry and knowledge of the parent; and when the individual dies, it loses all  its ancestry, including all its traits.   

Let us note in passing that, for the parameter values in Figure \ref{fig:FullTimeSeries_super},
\eqref{ET} gives $\E{T} \approx 85.3$. Manual counting in Figure \ref{fig:FullTimeSeries_super}  yields $\approx 83$ sawteeth, which amounts to a mean tooth length (or extinction time) of 10000/83 $\approx$ 120. This overestimation of the length is presumably due to the underestimation of the number of teeth because small teeth are not resolved.

\subsection{Related concepts and phenomena in population genetics}
While the evolving genealogies considered here, and, in particular, their sawtooth dynamics, seem to be new, there are some  related concepts and phenomena in population genetics. We have already  mentioned the model of Muller's ratchet with tournament selection \citep{GCSmadiWakolbinger23}, where evolving  ASGs play a crucial role. In contrast to the ALG, these ASGs also have coalescence events, but the general behaviour is similar: on a longer time scale, there are repeated extinction events of the (so-called minimal-load) ASG (they correspond to the clicks of the ratchet). Between these events, the line-counting process of the ASG spends most of its time in a quasistationary state. On a shorter time scale between the extinction events, ASGs corresponding to different samples merge quickly, provided these samples are sufficiently large.

The connection to the  aforementioned evolving genealogies in Kingman's coalescent  \citep{PfaffelhuberWakolbinger2006,PfaffelhuberWakolbinger2011} is somewhat more remote. In contrast to the ALG, these genealogies are trees; their lengths also display a sawtooth behaviour (see \citet[Fig.~2]{PfaffelhuberWakolbinger2011}), but this stems from the change of roots (that is, the most recent common ancestor) of the trees (and subtrees), as opposed to the extinction of the ALG. Interestingly, \citet[Sec.~5]{PfaffelhuberWakolbinger2006} also superpose an innovation mechanism, namely infinite-sites mutation, to their genealogies, which is perfectly analogous to our invention of traits.  The mutations along the tree then translate into the differences between contemporary sequences, and the jumps of the most recent common ancestor  lead to  \emph{clusters} of such sequence differences, in line with a classical observation by \cite{Watterson1982}.

%%%%     3.7

\subsection{Popularity spectrum}

\citet{aoki2018} (see also \cite{strimling2009tpb,
fogarty2015,fogarty2017}) studied the popularity spectrum $({P_l})_{l \in [N]}$ in discrete-time models of cultural evolution under various assumptions on the innovation and transmission mechanisms, where $P_l \defeq \E{C_{N,l}}$ and $C_{n,l}$ is the number of traits carried by exactly $l$ individuals in a sample of size $n$ at stationarity; this was obtained via the equilibrium condition in the forward model. In Appendix \ref{appendix:recursion}, we adapt this method to our model and derive 
\begin{equation}\label{PopSpectrum}
    {P_l} = \left( \frac{s}{Nu} \right)^{l-1}
    \frac{\mu \fall{N}{l}}{ul}, \quad \ell \in [N],
\end{equation}
in agreement with the simulations in Figure~\ref{fig:PopSpec}. Starting from  the $P_l$, we also extend  the trait frequency spectrum to a sample of size $n$ and obtain in Appendix \ref{appendix:recursion} that
\begin{equation}\label{PopSpectrumSample}
 \E{C_{n,i}} = \sum_{l=1}^{N} P_l  \frac{ \binom{l}{i} \binom{N-l}{n-i}}{ \binom{N}{n} }, \quad i \in [n].
\end{equation}
As a consistency check, or as an alternative derivation of $\E{C_{n}}$ via a forward approach, we also get $\E{C_{n}} = \sum_{i=1}^n \E{C_{n,i}} = \mu \E{L_n}$ of \eqref{moments_Cn} and \eqref{ELn_culture_withd} (note that $\E{C_{N}}$ is called $C_{\rm{pop}}$ in some of the aforementioned studies).

The bimodal distribution of the popularity spectrum in the supercritical case observed in Figure \ref{fig:PopSpec} reflects the by-now-familiar   fact that some traits die out quickly while the others survive for a long time, with very few intermediate traits, recall Figure \ref{fig:heatmapLandC}.

We would like to note that a bimodal popularity spectrum is not tied to joint transmission of traits ($b=1$). It also occurs in the model of \citet{NakamuraWakanoAokiKobayashi2020} with independent transmission ($b<1$) when an individual may learn both from its genetic parent (vertical transmission) and from any other individual (horizontal transmission). Bimodality is, therefore, more than just an artifact of $b=1$.

\begin{figure}
    \centering    \includegraphics[width=100mm]{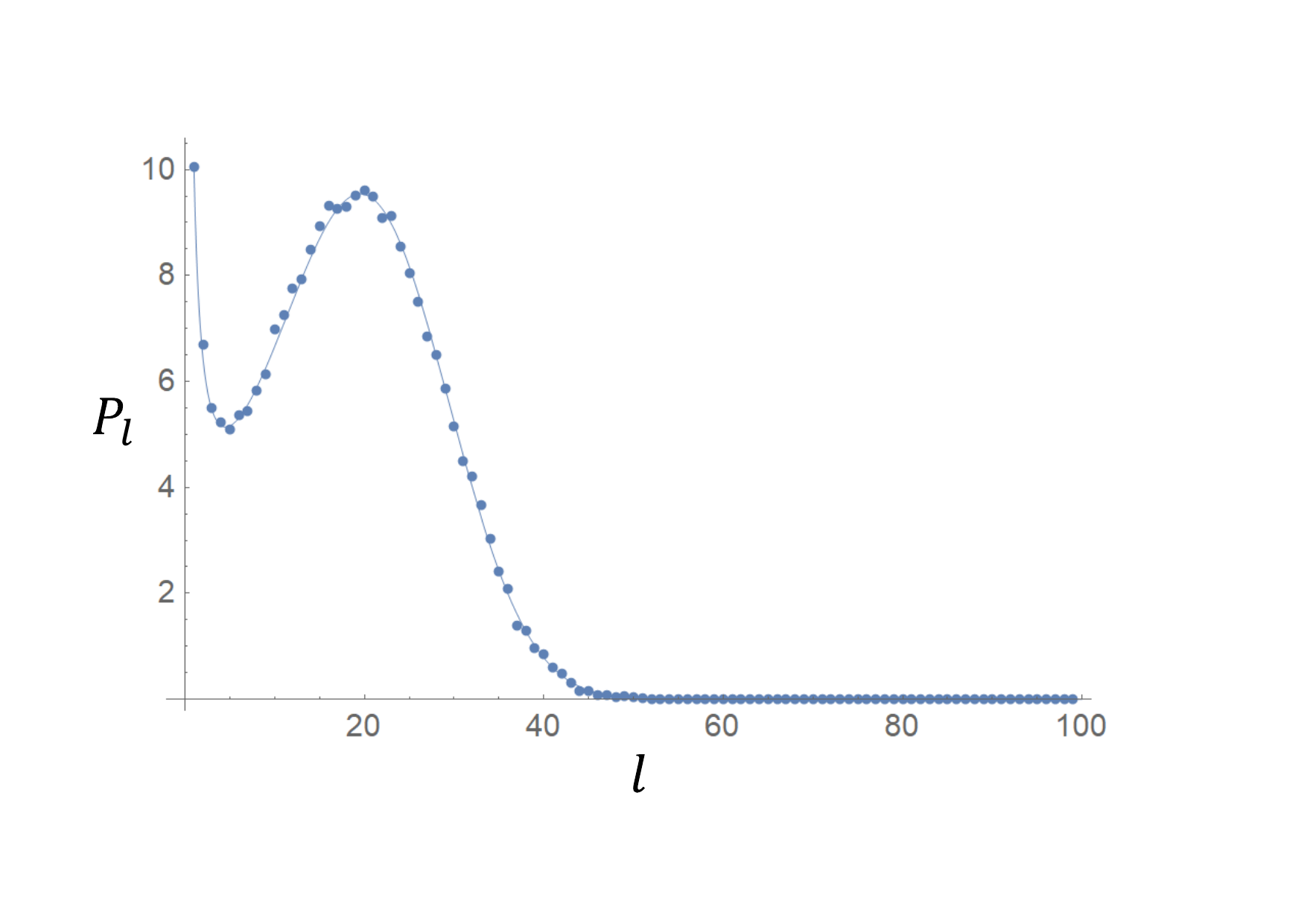}
    \caption{Popularity spectrum obatained from a long forward simulation run (bullets) compared with the analytic formula \eqref{PopSpectrum} (solid line). $s=1.3, N=100, \mu=0.1, u=1, t_{\rm{max}}=10^5$.}
    \label{fig:PopSpec}
\end{figure}

%%%%%%%%%     DISCUSSION     %%%%%%%%%%%%%%%

\section{Discussion}
Mathematical models for the accumulation of cultural traits in a population over time are crucial for understanding the cultural variation in  human populations. Here we have investigated cultural dynamics in which selectively neutral, discrete cultural traits are invented by individuals independently of one another at a constant rate and transmitted between individuals through random social learning. 
Related previous analyses \citep{strimling2009tpb,LehmannAokiFeldman2011,fogarty2015,fogarty2017,aoki2018,NakamuraWakanoAokiKobayashi2020} were mostly static and centered around the traits, focussing on the expectation of quantities like the number of distinct cultural traits maintained in a population at stationarity. Our approach is different in three ways: first, we concentrate on the underlying genealogies (as first described by \citet{aguilar2015}) and thus obtain results that remain true independently of the traits. Second, we investigate the dynamical aspects, which seem to be unexplored so far. At the heart of this  are the time evolution of the set of ancestors of a single individual or the entire population; the merging process between the various ancestral sets; the metastability of the corresponding counting processes;
and the concept of evolving genealogies. Third, we also obtain moments of $L_n$ and $C_n$.

The most conspicuous feature of our model is the sawtooth-like behaviour of $(L_N^t)_{t \geq 0}$ in the supercritical case, which is reflected in $(C_N(t))_{t \geq 0}$. The former is an inherent (and, to the best of our knowledge, novel) property of the evolving genealogy and deserves further mathematical investigation. In contrast, the latter is a consequence of the  transmission model  assumed here for simplicity:  in every learning 
event, all traits carried by the role model are 
transmitted to the learner, and the traits, once learnt, are never forgotten until death strikes the carrier. 
Thus, traits are never detached from one another once they come together
in an individual. This ``sticky'' nature of traits  implies that (nearly) ignorant individuals immediately become knowledgeable once they  learn from a knowledgeable parent. Despite the somewhat artificial assumption, the model may capture certain aspects of the accumulation of technologies or knowledge in human populations. 
For example, techniques for traditional craftwork are usually products of the accumulation of improvements 
over many generations and are transmitted as clusters through close apprenticeship;  but they can be irreversibly lost if all the carriers happen to die or no successors are found.

Indeed, it is increasingly recognised based on evidence as well as theory that cumulative cultural evolution of human technologies does not occur in a monotonic manner, but via phases of gradual accumulation 
of innovations  punctuated by sudden changes like rapid cascades of innovations (``leaps'') 
or drastic loss \citep{KolodonyCreanzaFeldman2015,VidiellaCarrignonBentleyETAL2022}.
In particular, the sudden loss of sophisticated or complex technologies and subsequent replacement by 
degraded ones in ethnographic or archaeological records is often  attributed
to demographic factors such as population bottlenecks or the fragmentation of  social networks, see, for example, \citet{Henrich2004} or
\citet{JacobsRobers2009}. Our results demonstrate that such behaviour is, in principle, also possible in the absence of demographic changes, via the inherent stochastic properties of the evolving genealogies.

As an outlook, let us nevertheless extend the model to allow for independent transmission of traits, similar to the discrete-time model of  \cite{KobayashiWakanoOhtsuki2018}, by assuming that each trait of the role model is independently transmitted to the learner with probability $b$ ($b=1$ reproduces our original model). 

To keep the mean number of traits transmitted via learning events constant, we replace $s$ by $s/b$.
For $b=0.99$ and, hence, our usual choice $s=1.3$ replaced by $s=130/99$, $(c_N(t))_{t \geq 0}$ still follows $\mu (\ell_N^t)_{t \geq 0}$ closely, see Figure \ref{fig:LNt_beta} (top).   For $b=13/14$ and, hence, $s=1.3$ replaced by $s=1.4$ (Figure \ref{fig:LNt_beta} bottom), 
the average size of the sawteeth in $(\ell_N^t)_{t \geq 0}$ is significantly higher, in line with the steep increase of $\E{L_N}$ with $s$, see Figure \ref{fig:changing_s}. 
More importantly, the  dynamics of $(c_N(t))_{t \geq 0}$ does not mirror that of $(\ell_N^t)_{t \geq 0}$ any more. More precisely,  a collapse of $(\ell_N^t)_{t \geq 0}$ still implies a  collapse of $(c_N(t))_{t \geq 0}$, simply because an extremely short genealogy provides no opportunity to accumulate innovations, no matter how they are transmitted. However, $(c_N(t))_{t \geq 0}$ remains below $(\mu \ell_N^t)_{t \geq 0}$, because inheritance of traits is increasingly diluted with their age.

\begin{figure} 
    \centering
    \includegraphics[width=160mm] 
    {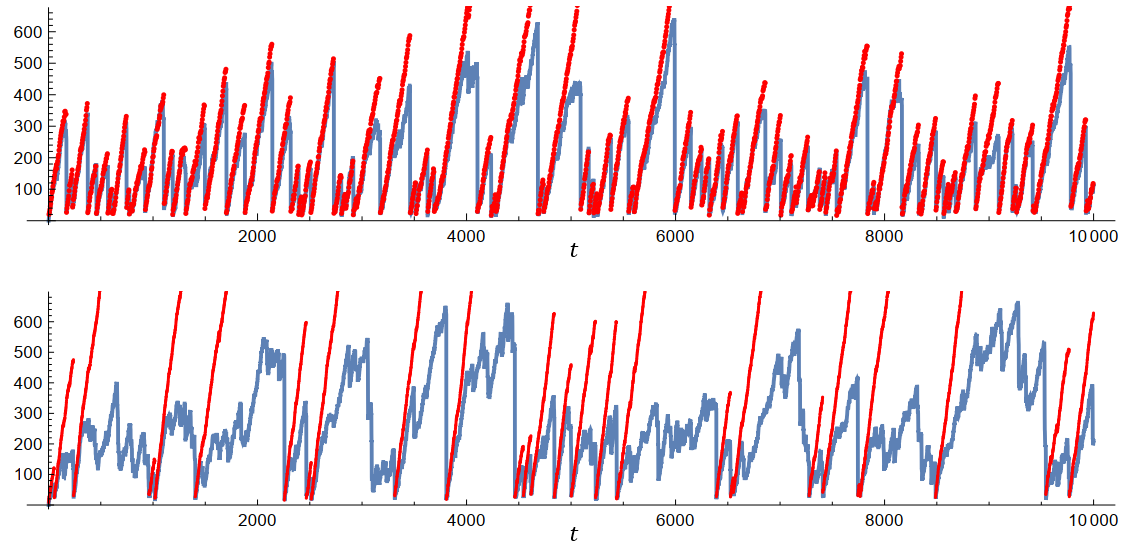}
    \caption{Simulated time series  $(\mu \ell_N^t)_{t \in [t_{\rm{max}}]_0}$ (red) and $(c_N(t))_{t \in [t_{\rm{max}}]_0}$ (blue) with $b<1$. Upper panel: $s=130/99,b=99/100$. Lower panel: $s=14/10,b=13/14$. Common parameters: $N=100, \mu=0.1, u=1, t_{\rm{max}} = 10000$. Where  no red line is drawn, it exceeds 700.
    }
    \label{fig:LNt_beta}
 \end{figure}

The  aim of the present study was to develop  the genealogical theory of cultural evolution, highlighting its unique features as well as close relationship with ancestral structures in population genetics. The cultural genealogies form the basis upon which arbitrary trait transmission models may be superposed; ours was chosen for its mathematical appeal rather than for realism or even data analysis.

For the more realistic transmission model mentioned above (with independent transmission both horizontal and vertical), \cite{NakamuraWakanoAokiKobayashi2020} have compared the predicted popularity spectrum  to the empirical equivalent extracted from datasets of cultural variation between archaeological sites or ethno-linguistic groups. Such datasets record the variation in, for example, bead types in Palaeolithic Europe, pottery in Neolithic Europe, and Fijian and Polynesian canoe design  in binary format (presence and absence of each character) at the level of groups of humans   rather than single ones. 
(The number of groups in these data sets ranges from 10 to 195. 
Incidentally, therefore,  $N=100$ as chosen for computational practicality in our simulations seems to be appropriate.)  
A note of caution is in order, however: if there is an inherent temporal dynamics, as in our model, then the pure equilibrium consideration and use of cross-sectional data alone may cause misleading results. However, the  temporal resolution required for a comparison with the \emph{dynamics} of learning models cannot be expected in archeological or ethnographic data.  Possibly, digital media (e.g., social network services) may  provide such data for future work.

\newpage

\section*{Acknowledgment}
We are greatly indebted to Anton Wakolbinger (Frankfurt) for enlightening discussions about  Muller's ratchet with tournament selection and many other conceptual and technical aspects of the paper, including a more elegant  proof of Proposition A.1. We also thank Tom Britton (Stockholm) for a helpful conversation about the connection to epidemiology. Two anonymous reviewers dedicated a lot of time and expertise to the manuscript and made invaluable suggestions to improve it. J.Y.W. received funding from MEXT/JSPS KAKENHI Grant Numbers JP21K03357 and JP24H00001. H.O.~acknowledges the support from the "Evolutionary Studies of Complex Adaptive Systems" Research Grant from the Research Center for Integrative Evolutionary Science, SOKENDAI. Y. K. received funding from MEXT/JSPS KAKENHI Grant Numbers 25K06712 and 25K00199. E.B. received funding from the Deutsche Forschungsgemeinschaft,
Germany (DFG, German Research Foundation) --- Project-ID 317210226
--- SFB 1283.

\newpage

\setcounter{equation}{0}
\setcounter{section}{0}

\numberwithin{equation}{section} 

\renewcommand{\thesection}{\Alph{section}}

\renewcommand{\theequation}{\thesection\arabic{equation}} % How equations in Appendix look like

\section*{Appendix}

\section{Sojourn times of the birth-death process}\label{sec:sojourn}
Consider our general continuous-time birth-death process $(Z(t))_{t \geqslant 0}$ on $[N]_0$  with unique absorbing state  0 and birth and death rates $\lambda_{j}$ and $\mu_{j}$ with $\lambda_{0}=0$, $\lambda_j \geq 0$ for $j \in [N-1]$, and $\mu_j>0$ for $j \in [N]$, complemented  by   $\mu_{0}= \lambda_N=0$.  

\begin{prop}
Let $Z(0)=i \in [N]$, let $T_\ell$ be the time where the process hits state $ \ell \; (1 \leq \ell \leq i)$ for the first time, and let $S_j^{(\ell)}$ be the total sojourn time in state $j \; (\ell \leq j \leq N)$ in the interval $[T_\ell, T_{\ell-1}]$. 
We then have
\begin{equation*}
    \EE \big [ S_j^{(\ell)}  \mid Z(0)=i \big ] = 
        \frac{\lambda_{\ell} \cdots \lambda_{j-1}}{\mu_{\ell} \cdots \mu_{j}} \eqdef \eta^{}_{\ell j} 
\end{equation*}
independently of $i$, where the empty product is 1.
\end{prop}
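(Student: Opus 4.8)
The plan is to use the strong Markov property to reduce the statement to an excursion problem, and then to pin down the expected sojourn times by a flow-balance (up-crossing equals down-crossing) argument together with a geometric count for the base case. First, since $S_j^{(\ell)}$ is measurable with respect to the process after the stopping time $T_\ell$, the strong Markov property lets me condition on $Z(T_\ell)=\ell$ and forget the history before $T_\ell$. Writing $g_j \defeq \EE[S_j^{(\ell)}\mid Z(0)=i]$, I may therefore compute $g_j$ as the expected total time spent in state $j$ when the process is started at $\ell$ and run until it first hits $\ell-1$. Two structural facts are used throughout: the process is \emph{skip-free} (nearest-neighbour), so during $[T_\ell,T_{\ell-1}]$ it never leaves $\{\ell,\ell+1,\ldots,N\}$ and must pass through $\ell$ on its way down to $\ell-1$; and, because the state space is finite with $\mu_j>0$ for all $j\in[N]$, the hitting time of $\ell-1$ is a.s.\ finite with finite expectation, so all the $g_j$ are finite.

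For the base case $j=\ell$, each sojourn at $\ell$ is exponential with rate $\lambda_\ell+\mu_\ell$, and by skip-freeness the excursion terminates exactly at the visit to $\ell$ at which the process jumps down. Hence the number of visits to $\ell$ is geometric with success probability $\mu_\ell/(\lambda_\ell+\mu_\ell)$, has mean $(\lambda_\ell+\mu_\ell)/\mu_\ell$, and
\[
  g_\ell = \frac{\lambda_\ell+\mu_\ell}{\mu_\ell}\cdot\frac{1}{\lambda_\ell+\mu_\ell} = \frac{1}{\mu_\ell} = \eta_{\ell\ell}.
\]

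For the recursion, fix $j$ with $\ell\leq j\leq N-1$ and examine the edge between $j$ and $j+1$. Since the excursion starts at $\ell\leq j$ and ends at $\ell-1\leq j$, the process sits at or below level $j$ at both endpoints, so along the path every up-crossing $j\to j+1$ is matched by a later down-crossing $j+1\to j$; in particular the expected numbers of up- and down-crossings across this edge are equal. On the other hand, the compensator of the counting process of $j\to j+1$ jumps is $\lambda_j\int \mathbf{1}_{\{Z(r)=j\}}\dd r$, so optional stopping at $T_{\ell-1}$ shows that the expected number of $j\to j+1$ transitions equals $\lambda_j g_j$; likewise the expected number of $j+1\to j$ transitions equals $\mu_{j+1}g_{j+1}$. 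Equating the two yields $\lambda_j g_j=\mu_{j+1}g_{j+1}$, and telescoping from the base case gives
\[
  g_j = g_\ell\prod_{k=\ell}^{j-1}\frac{\lambda_k}{\mu_{k+1}} = \frac{\lambda_\ell\cdots\lambda_{j-1}}{\mu_\ell\cdots\mu_j} = \eta_{\ell j},
\]
which is the claim.

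The step I expect to be the main obstacle is the rigorous justification of the compensator identity up to the random time $T_{\ell-1}$ (equivalently, of the matching of up- and down-crossings in expectation): this requires the a.s.\ finiteness and integrability of $T_{\ell-1}$ and an optional-stopping argument for the associated martingale, both of which the finite state space and $\mu_j>0$ guarantee. An alternative, entirely elementary route that sidesteps the martingale bookkeeping is a direct first-step analysis: set up the linear system for the expected occupation times $g_j$ by conditioning on the first jump out of each state, and verify that $\eta_{\ell j}$ solves it. This trades the probabilistic transparency of the flow-balance argument for a routine (if slightly tedious) induction.
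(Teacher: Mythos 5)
Your proof is correct, but its inductive engine is genuinely different from the paper's. The base case coincides: both you and the paper count a geometric number of visits to the starting level $\ell$, each of mean duration $1/(\lambda_\ell+\mu_\ell)$, and apply Wald's identity to get $g_\ell=1/\mu_\ell$. From there, the paper inducts \emph{downward in the starting level}: the excursion from $\ell-1$ to $\ell-2$ contains a geometric number $N_{\ell-1\to\ell}$ of iid copies of the excursion from $\ell$ to $\ell-1$ (mean $\lambda_{\ell-1}/\mu_{\ell-1}$), so $\EE[S_j^{(\ell-1)}]=(\lambda_{\ell-1}/\mu_{\ell-1})\,\EE[S_j^{(\ell)}]$ by a second application of Wald, and the product formula telescopes in $\ell$. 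You instead fix the single excursion $[T_\ell,T_{\ell-1}]$ and recurse \emph{upward in $j$} through the crossing balance $\lambda_j g_j=\mu_{j+1}g_{j+1}$. Your route makes it transparent why $\eta_{\ell j}$ is a telescoping product and yields the edge-by-edge occupation-measure identity as a by-product (the same identity that underlies stationary measures of birth--death chains); the paper's route buys elementariness---only geometric variables and Wald's identity, no martingale or compensator machinery---consistent with its stated aim of giving a ``simple probabilistic argument''. Finally, the technical obstacle you flag (optional stopping for the compensated counting process) can be dissolved by the same device as your base case: each visit to $j$ ends with an up-jump independently with probability $\lambda_j/(\lambda_j+\mu_j)$, so with $V_j$ the number of visits to $j$ during the excursion, Wald gives both $g_j=\EE[V_j]/(\lambda_j+\mu_j)$ and an expected up-crossing count of $\EE[V_j]\,\lambda_j/(\lambda_j+\mu_j)=\lambda_j g_j$, and likewise the expected down-crossing count from $j+1$ is $\mu_{j+1}g_{j+1}$; combined with the pathwise equality of the two crossing counts (both endpoints of the excursion lie at or below $j$), this yields $\lambda_j g_j=\mu_{j+1}g_{j+1}$ with no martingale bookkeeping.
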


\begin{proof}
The $T_\ell$ are finite almost surely due to the almost sure absorption of $(Z(t))_{t \geqslant 0}$ in 0. If $\lambda_{\ell} \cdots \lambda_{j-1}=0$, the claim is trivially true. We therefore assume $\lambda_{\ell} \cdots \lambda_{j-1}>0$. Let $V_{j}$ be the number of visits  the chain pays to state $j$ in $[T_\ell, T_{\ell-1}]$ (this includes the initial visit if $j=\ell$). Each such visit  has a mean duration of $1/(\lambda_j+\mu_j)$. By Wald's identity, we have
\begin{equation}\label{Wald}
    \EE \big [ S_j^{(\ell)} \mid Z(0)=i \big ] =  \frac{\E{V_j} }{\lambda_j + \mu_j}.
\end{equation}
To compute $\E{V_j}$, we define a \emph{discrete-time} birth-death process $(Z'(n))_{n \geq 0}$ on  $\{\ell-1, \ell, \dots, N\}$ with $Z'(0)=\ell-1$ and transition probabilities 
\begin{equation}\label{transprob}
         p_{\ell-1, \ell}  = 1, \quad \text{together with } \, p_{k, k-1} = \frac{\mu_{k}}{\lambda_{k}+\mu_{k}} \quad 
        \text{and } \, p_{k,k+1} = \frac{\lambda_{k}}{\lambda_{k}+\mu_{k}} \quad \text{for } \, \ell \leq k \leq N.
\end{equation}
Then $T'_\ell=1$ is the first time where $(Z'(n))_{n \geq 0}$ reaches $\ell$, and we denote by $T'_{\ell-1}$ the time of its first visit to $\ell-1$. Clearly, $(Z'(n))_{n \geq 0}$ restricted to $[T'_{\ell},T'_{\ell-1}]$ has the same law as the embedded jump process of $(Z(t))_{t \geq 0}$ restricted to $[T_{\ell},T_{\ell-1}]$, so  $V'_{j}$, the number of visits of $(Z'(n))_{n \geq 0}$ to $j$ in  $[T'_{\ell},T'_{\ell-1}]$, has the same law as $V_j$, and, in particular, $\E{V_j} = \E{V'_j}$. We now complement the $V_j'$ ($\ell \leq j \leq N$) by $V'_{\ell-1} \defeq 1$ (for the initial visit of $(Z'(n))_{n \geq 0}$ to $\ell-1$) and note that, for $\ell \leq j \leq N$, the final step of $(Z'(n))_{n \geq 0}$ to $\ell-1$ does not count into $V_j'$. This gives $V'_k=\sum_{n=0}^{T_{\ell-1}-1} \bs{1}_{\{Z(n)=k\}}$ for $\ell-1 \leq k \leq N$, and $v \defeq (\E{V'_{\ell-1}}, \ldots, \E{V'_N})$ constitutes an invariant measure of $(Z'(n))_{n \geq 0}$, see \citet[Thm.~1.7.5]{Norris1997}. On the other hand, $(Z'(n))_{n \geq 0}$ is recurrent and has a unique stationary distribution, $u=(u_{\ell-1}, u_{\ell}, \ldots, u_{N})$ say; it satisfies the detailed balance condition $u_{k} p_{k,k+1} = u_{k+1} p_{k+1,k}$ for $\ell-1 \leq k <N$. By Thm.~1.7.6 of \citet{Norris1997}, $v$ is proportional to $u$, so
\begin{equation}
    \E{V'_j} = \frac{\E{V'_j}}{\E{V'_{\ell-1}}} = \frac{u_{j}}{u_{\ell-1}} = \frac{p_{\ell-1, \ell} \cdots p_{j-1, j}}{p_{\ell, \ell -1} \cdots p_{j, j-1}} = \frac{\lambda_{\ell} \cdots \lambda_{j-1}}{\mu_{\ell} \cdots \mu_{j}} (\lambda_{j} + \mu_{j}),
\end{equation}
where the first equality comes from $V'_{\ell-1}=1$, the second-last from  detailed balance, and the last one from \eqref{transprob}. This, combined with $\EE(V_j)=\EE(V_j')$ and \eqref{Wald}, completes the proof.
\end{proof}

We would like to emphasise that \citet{Stefanov1995} has proved the analogous result for finite-state discrete- or continuous-time birth-death processes, not necessarily absorbing, with the help of analytical properties of certain exponential families. We have complemented this here by a simple probabilistic argument.

\section{A derivation of the moments of the path integrals of birth-death processes}\label{sec:moment}

The moments \eqref{mth_moment_formula} of the path integral \eqref{def_path-integral} are usually derived via Laplace transforms (see, for example, \citet[App.~D]{goel1974stochastic}).  
Here we take an alternative route that bypasses Laplace transforms and derives a first-step equation in a direct way, thus validating (62) derived  in a heuristic way by \cite{norden1982distribution} via the backward equation.
Remember that our birth-death process $(Z(t))_{t \geqslant 0}$ is restricted to $[N]_0$ with birth and death rates $\lambda_{j} \geq 0$ and $\mu_{j}>0$ for $j \in [N]$, $\lambda_0=0$, and the convention $\mu_{0}^{}= \lambda_{N}^{}=0$.

Let us now calculate the $m$-th  moment of our path integral $\mathcal{I}_{i}(f)$ of \eqref{def_path-integral}, following the standard first-step approach outlined, for example, in \citet[Sec.~3.4]{KarlinPinsky2011}. Suppose that we are currently in state $i \in [N]$. The sojourn time in this state, whose realisation is denoted by $r$ below, follows the exponential distribution with parameter $\lambda_{i}+\mu_{i}$. This time period contributes $r f(i)$ to the path integral. After the transition away from $i$, the further contribution to the path integral until absorption  is distributed as $\mathcal{I}_{i+1}(f)$ or $\mathcal{I}_{i-1}(f)$, depending on whether it moves from $i$ to  $i+1$ (probability $\lambda_{i}/(\lambda_{i}+\mu_{i})$) or  $i-1$ (probability $\mu_{i}/(\lambda_{i}+\mu_{i})$).  Thus we obtain
\begin{equation} \label{exp_Tim_new}
	\begin{split}
		& \E{(\mathcal{I}_{i}(f))^{m}} \\
        &=  \int_{0}^{\infty} \left\{ \frac{\lambda_{i}}{\lambda_{i}+\mu_{i}} \E{\{\mathcal{I}_{i+1}(f) + r f(i)\}^{m}} + \frac{\mu_{i}}{\lambda_{i}+\mu_{i}} \E{\{\mathcal{I}_{i-1}(f) + r f(i)\}^{m}} \right\} (\lambda_{i}+\mu_{i}) e^{-(\lambda_{i}+\mu_{i})r} \mathrm{d}r \\
        &= \lambda_{i} \E{ \int_{0}^{\infty} \{\mathcal{I}_{i+1}(f) + r f(i)\}^{m} e^{-(\lambda_{i}+\mu_{i})r} \mathrm{d}r} + \mu_{i} \E{ \int_{0}^{\infty} \{\mathcal{I}_{i-1}(f) + r f(i)\}^{m} e^{-(\lambda_{i}+\mu_{i})r} \mathrm{d}r}
	\end{split}
\end{equation}
(products containing the factor $\lambda_N$ are $0$).
The integrals in the second line  can be evaluated by integration by parts as
\begin{equation}
    \begin{split}
        & \int_{0}^{\infty} \{\mathcal{I}_{i \pm 1}(f) + r f(i)\}^{m} e^{-(\lambda_{i}+\mu_{i})r} \mathrm{d}r \\
        &= \left[ -\{\mathcal{I}_{i \pm 1}(f) + r f(i)\}^{m} \frac{e^{-(\lambda_{i}+\mu_{i})r}}{\lambda_{i}+\mu_{i}}\right]_{0}^{\infty} + m f(i) \int_{0}^{\infty} \{\mathcal{I}_{i \pm 1}(f) + r f(i)\}^{m-1} \frac{e^{-(\lambda_{i}+\mu_{i})r}}{\lambda_{i}+\mu_{i}} \mathrm{d}r \\
        &= \frac{(\mathcal{I}_{i \pm 1}(f))^{m}}{\lambda_{i}+\mu_{i}} + \frac{m f(i)}{\lambda_{i}+\mu_{i}} \int_{0}^{\infty} \{\mathcal{I}_{i \pm 1}(f) + r f(i)\}^{m-1} e^{-(\lambda_{i}+\mu_{i})r} \mathrm{d}r,
    \end{split}
\end{equation}
so  $\E{(\mathcal{I}_{i}(f))^{m}}$ turns into
\begin{equation}
    \begin{split}
         \E{(\mathcal{I}_{i}(f))^{m}}  
        = \, & \frac{\lambda_{i}}{\lambda_{i}+\mu_{i}} \E{ (\mathcal{I}_{i+1}(f))^{m} } + \frac{\mu_{i}}{\lambda_{i}+\mu_{i}} \E{ (\mathcal{I}_{i-1}(f))^{m} } \\
        & + \frac{m f(i)}{\lambda_{i}+\mu_{i}} \Bigg\{ \lambda_{i} \E{ \int_{0}^{\infty} \{\mathcal{I}_{i+1}(f) + r f(i)\}^{m-1} e^{-(\lambda_{i}+\mu_{i})r} \mathrm{d}r} \\
        &\phantom{\frac{m f(i)}{\lambda_{i}+\mu_{i}} \Bigg\{ } + \mu_{i} \E{ \int_{0}^{\infty} \{\mathcal{I}_{i-1}(f) + r f(i)\}^{m-1} e^{-(\lambda_{i}+\mu_{i})r} \mathrm{d}r} \Bigg\}.
    \end{split}
\end{equation}
Since the expression in curly brackets equals  $\E{(\mathcal{I}_{i}(f))^{m-1}}$ by \eqref{exp_Tim_new}, we obtain the first-step equation
\begin{equation}\label{moment_equation_new}
	(\lambda_{i} + \mu_{i}) \E{ (\mathcal{I}_{i}(f))^{m} } =  \lambda_{i} \E{ (\mathcal{I}_{i+1}(f))^{m} } + \mu_{i} \E{ (\mathcal{I}_{i-1}(f))^{m} } + m f(i) \E{ (\mathcal{I}_{i}(f))^{m-1} }, \; m\geq 1, \,  i \in [N],
\end{equation}
with boundary conditions $\E{ (\mathcal{I}_{0}(f))^{m} } = 0$ for all $m \geq 1$ and $\E{ (\mathcal{I}_{i}(f))^{0} } = 1$ for all $i \in [N]$. 

Since \eqref{moment_equation_new} expresses  $m$-th  moments in terms of $(m-1)$-st  moments, we can now generalise \cite{norden1982distribution} and recursively solve \eqref{moment_equation_new} from lower to higher  moments.
To this end, let  $\bs{b}^{(m)} \defeq \left( \E{ (\mathcal{I}_{1}(f))^{m} }, \cdots, \E{ (\mathcal{I}_{N}(f))^{m} } \right)^{\top}$ for $m\geq 0$; note that $\bs{b}^{(0)} = \bs{1}$, a vector whose  components are all one. Then \eqref{moment_equation_new} reads 
\begin{equation}\label{moment_equation_matrix_form_1_new}
    \bs{A} \bs{b}^{(m)} = m \bs{F} \bs{b}^{(m-1)}, \quad m>0,
\end{equation}
where $\bs{A}=(A_{ij})_{i,j \in [N]}$ is an $N \times N$ matrix with elements
\begin{equation}
{A}_{ij} =
    \begin{cases}
        \lambda_{i} + \mu_{i}, & j=i, \\
        -\lambda_{i}, & j=i+1, \\
        -\mu_{i}, & j=i-1, \\
        0, & \text{otherwise},
    \end{cases}
\end{equation}
and $\bs{F}$ is the $N \times N$ diagonal matrix  $\bs{F}= \textrm{diag}\left[ f(1), \cdots, f(N)\right]$. Therefore, from \eqref{moment_equation_matrix_form_1_new} we immediately obtain
\begin{equation}\label{moment_equation_matrix_form_2_new}
     \bs{b}^{(m)} = m \left(\bs{A}^{-1} \bs{F}\right) \bs{b}^{(m-1)},
\end{equation}
which leads to
\begin{equation}
\bs{b}^{(m)} = m  \left(\bs{A}^{-1} \bs{F}\right) \bs{b}^{(m-1)} = \cdots = m!  \left(\bs{A}^{-1} \bs{F}\right)^{m} \bs{b}^{(0)} = m!  \left(\bs{A}^{-1} \bs{F}\right)^{m} \bs{1}.
\end{equation}
 From the standard theory of Markov chains (\citealt[Chap. III]{KemenySnell1960}; see also \citealt[p. 694]{norden1982distribution}), we know that $\bs{A}$ is invertible and $\bs{A}^{-1}$ has elements  $\left( \bs{A}^{-1} \right)_{ij} = \zeta_{ij}$  with $\zeta_{ij}$ of \eqref{t_ij_formula} and \eqref{s_ellj_formula};  so \eqref{mth_moment_formula} is an immediate consequence.

\section{Derivation of the first two moments of  $L_{n}$}\label{sec]branch_length}
\subsection{First moment of $L_{n}$}
Inserting \eqref{t_ij_formula_SIS} into \eqref{1st_moment_formula} with $f=\mathrm{id}$ leads to 
\begin{equation}\label{ELn_culture_pre}
     \E{L_{n}} =\sum_{j=1}^{N} \sum_{\ell=1}^{\min(n,j)} \frac{1}{u} \left(\frac{s}{Nu}\right)^{j-\ell} \fall{(N-\ell)}{j-\ell}.
\end{equation}
Substituting $m=j-\ell$ and changing  summation turns this into
\begin{equation}\label{ELn_culture_withd_pre}
    \E{L_{n}} = \sum_{m=0}^{N-1} \sum_{\ell=1}^{\min(n, N-m)} \frac{1}{u} \left(\frac{s}{Nu}\right)^{m} \fall{(N-\ell)}{m}.
\end{equation}
The latter expression is further simplified by using 
\begin{equation}\label{discrete_analysis_formula}
    \sum_{i=a}^{b} \fall{i}{m} = \frac{\fall{(b+1)}{m+1} - \fall{a}{m+1}}{m+1},
\end{equation}
which holds for $a,b,m \in \NN$ with $a \leq b$ 
(see Chapter 2 of \citealt{Graham1994concrete}).
In our case, set $a=N-M$ with $M=\min(n, N-m)$ and $b=N-1$. If $n \leq N-m$, then $\fall{(N-M)}{m+1}=\fall{(N-n)}{m+1}$. If, on the other hand,  $n > N-m$,  we have $\fall{(N-M)}{m+1}=\fall{m}{m+1}=0$, but also  $\fall{(N-n)}{m+1}=0$ because $m>N-n$. So, in any case, $\fall{(N-M)}{m+1}=\fall{(N-n)}{m+1}$, and
\begin{equation}\label{discrete_analysis_formula_modified}
    \sum_{\ell=1}^{\min(n, N-m)} \fall{(N-\ell)}{m} = \sum_{i=N-M}^{N-1} \fall{i}{m} = \frac{\fall{N}{m+1} - \fall{(N-M)}{m+1}}{m+1} = \frac{\fall{N}{m+1} - \fall{(N-n)}{m+1}}{m+1},
\end{equation}
which allows us to carry out the second sum in \eqref{ELn_culture_withd_pre} and yields
\begin{equation}\label{ELn_culture_withd2}
    \E{L_{n}} = \sum_{m=0}^{N-1} \frac{1}{u} \left(\frac{s}{Nu}\right)^{m} \frac{\fall{N}{m+1} - \fall{(N-n)}{m+1}}{m+1}.
\end{equation}

\subsection{Second moment of $L_{n}$}
A similar calculation enables us to derive the second moment. Using  \eqref{mth_moment_formula} and \eqref{t_ij_formula_SIS} with $f=\textrm{id}$, we have
\begin{equation}\label{ELn2_culture_withd_pre}
    \begin{split}
        \E{L_{n}^{2}} &= 2 \sum_{j_{1}=1}^{N} \sum_{\ell_{1}=1}^{\min(n,j_{1})} \frac{1}{u} \left(\frac{s}{Nu}\right)^{j_{1}-\ell_{1}} \fall{(N-\ell_{1})}{j_{1}-\ell_{1}} \sum_{j_{2}=1}^{N}  \sum_{\ell_{2}=1}^{\min(j_{1},j_{2})} \frac{1}{u} \left(\frac{s}{Nu}\right)^{j_{2}-\ell_{2}} \fall{(N-\ell_{2})}{j_{2}-\ell_{2}}
        \\
        &= 2 \sum_{m_{1}=0}^{N-1} \sum_{\ell_{1}=1}^{\min(n, N-m_{1})} \frac{1}{u} \left(\frac{s}{Nu}\right)^{m_{1}} \fall{(N-\ell_{1})}{m_{1}} \sum_{m_{2}=0}^{N-1} \sum_{\ell_{2}=1}^{\min(m_{1}+\ell_{1}, N-m_{2})}\frac{1}{u} \left(\frac{s}{Nu}\right)^{m_{2}} \fall{(N-\ell_{2})}{m_{2}},
    \end{split}
\end{equation}
where we have set $m_{1}=j_{1}-\ell_{1}$ and $m_{2}=j_{2}-\ell_{2}$ in the second step.
By  applying \eqref{discrete_analysis_formula_modified} in the first and the last step, it is further evaluated to 
\begin{equation}\label{ELn2_culture_withd_derivation}
    \begin{split}
        \E{L_{n}^{2}}
        &= 2 \sum_{m_{1}=0}^{N-1} \sum_{\ell_{1}=1}^{\min(n, N-m_{1})} \frac{1}{u} \left(\frac{s}{Nu}\right)^{m_{1}} \fall{(N-\ell_{1})}{m_{1}} \\
        & \qquad \times\sum_{m_{2}=0}^{N-1} \frac{1}{u} \left(\frac{s}{Nu}\right)^{m_{2}} \frac{\fall{N}{m_{2}+1} - \fall{\{N-(m_{1}+\ell_{1})\}}{m_{2}+1}}{m_{2}+1} \\
        &= 2 \sum_{m_{1}=0}^{N-1} \sum_{m_{2}=0}^{N-1} \left(\frac{1}{u}\right)^{2} \left(\frac{s}{Nu}\right)^{m_{1}+m_{2}} \\ 
        & \qquad \times \sum_{\ell_{1}=1}^{\min(n, N-m_{1})}   \frac{\fall{(N-\ell_{1})}{m_{1}} \fall{N}{m_{2}+1} - \fall{(N-\ell_{1})}{m_{1}} \fall{\{N-(m_{1}+\ell_{1})\}}{m_{2}+1}}{m_{2}+1} \\
        &= 2 \sum_{m_{1}=0}^{N-1} \sum_{m_{2}=0}^{N-1} \left(\frac{1}{u}\right)^{2} \left(\frac{s}{Nu}\right)^{m_{1}+m_{2}} \sum_{\ell_{1}=1}^{\min(n, N-m_{1})}   \frac{\fall{(N-\ell_{1})}{m_{1}} \fall{N}{m_{2}+1} - \fall{(N-\ell_{1})}{m_{1}+m_{2}+1}}{m_{2}+1} \\
        &= 2 \sum_{m_{1}=0}^{N-1} \sum_{m_{2}=0}^{N-1} \left(\frac{1}{u}\right)^{2} \left(\frac{s}{Nu}\right)^{m_{1}+m_{2}} \\ 
        & \qquad \times \left[ \frac{\{\fall{N}{m_{1}+1} - \fall{(N-n)}{m_{1}+1}\} \fall{N}{m_{2}+1}}{(m_{1}+1)(m_{2}+1)} - \frac{\fall{N}{m_{1}+m_{2}+2} - \fall{(N-n)}{m_{1}+m_{2}+2}}{(m_{1}+m_{2}+2)(m_{2}+1)} \right]. \\
    \end{split}
\end{equation}

%%%%%%  Appendix D

\section{Derivation of the popularity spectrum}\label{appendix:recursion}
Recall that ${P_l}$ is the expected number of traits carried by exactly $l$ individuals in an equilibrium population of size $N$. We calculate $(P_l)_{l \in [N]}$ by adapting the method of \citet{aoki2018} (see also \cite{strimling2009tpb,
fogarty2015,fogarty2017}) to continuous time as follows.
 Death occurs at rate $u$ to every individual, and when there are $l$ individuals carrying the trait (that is, the trait has popularity $l$), this trait loses a carrier at death rate $d_l=ul$, upon which its popularity decreases by one. Every individual that carries a trait with popularity $l$ transmits this trait to some individual that lacks the trait at rate $s (N-l)/N$, so the trait gains a carrier and its popularity increases by one; altogether, therefore, we have birth events at rate $b_l=s l(N-l)/N$ when in state $l\in [N]$. Innovation events produce new traits of popularity 1 at rate $N\mu$ without affecting any other trait. Therefore, the $P_l$ obey the  system of differential equations 
\begin{equation}\begin{split}
     \dot P_1 & = N\mu +  d_2 P_2 -(b_1+d_1)P_1, \\
    \dot P_l &= b_{l-1} P_{l-1} +  d_{l+1} P_{l+1} - (b_l+d_l)P_l,
     \quad 2 \le l \le N,
\end{split}\end{equation}
with the convention $d_{N+1}=P_{N+1}=0$; note that the structure is the same as that of the Kolmogorov forward equation for a birth-death process with immigration to state 1.  The (unique) stationary solution is given by $b_l P_l=d_{l+1}P_{l+1}$, $l \in [N-1]$, together with $N\mu=uP_1$, which jointly give
\begin{equation}
    {P_l} = \frac{b_1 b_2 \cdots b_{l-1}}{d_2 d_3 \cdots d_l} {P_1} \hspace{30 pt}
    \text{for } \; 2 \le l \le N, \quad \text{complemented by} \quad P_1 =  \frac{N\mu}{u},
\end{equation}
or, more explicitly,
\begin{equation}\label{P_l}
    {P_l} = \left( \frac{s}{Nu} \right)^{l-1}
    \frac{\prod_{i=1}^{l-1}  i(N-i)}{l!} {P_1}
     = \left( \frac{s}{Nu} \right)^{l-1}
    \frac{\mu \fall{N}{l}}{ul},
\end{equation}
which is \eqref{PopSpectrum}. 

We can extend this method to obtain the trait frequency spectrum in a sample of size $n<N$. 
For  a trait with popularity $l$, the number of  individuals in a sample of size $n$ that carry the trait follows a hypergeometric distribution; more precisely, the probability that $i$ out of the $n$ individuals carry the trait is  $\binom{l}{i} \binom{N-l}{n-i}/\binom{N}{n}$.
If $C_{n,i}$ is the number of traits carried by exactly $i$ individuals in a sample of size $n$, then the trait frequency spectrum in a sample of size $n$ results as
\begin{equation}\label{ECni}
 \E{C_{n,i}} = \sum_{l=1}^{N} P_l  \frac{ \binom{l}{i} \binom{N-l}{n-i}}{ \binom{N}{n} },
\end{equation}
which is \eqref{PopSpectrumSample}.
As a consistency check, note that
\begin{equation}\label{ECn}
 \E{C_n} =  \sum_{i=1}^{n}  \E{C_{n,i}}
\end{equation}
by definition, and, since
\[
  \sum_{i=1}^n \frac{ \binom{l}{i} \binom{N-l}{n-i}}{ \binom{N}{n} } = 
  1- \frac{ \binom{l}{0} \binom{N-l}{n}}{ \binom{N}{n} } = 1- \frac{ \fall{(N-l)}{n} }{ \fall{N}{n} }
   = 1- \frac{ \fall{(N-n)}{l} }{ \fall{N}{l} } 
\]
(where the first step comes from the normalisation of the hypergeometric distribution and the last step is true because $\fall{(N-l)}{n} \fall{N}{l}  = \fall{N}{n+l}
=  \fall{(N-n)}{l} \fall{N}{n}$), \eqref{ECni} and \eqref{ECn} together give 
\begin{equation}
    \E{C_{n}}   
    =\sum_{l=1}^{N} P_l
    \left(  1- \frac{ \fall{(N-n)}{l} }{ \fall{N}{l} } \right) = \mu \E{L_n},
\end{equation}
where the last step uses \eqref{P_l} and \eqref{ELn_culture_withd}; the result agrees with $\E{C_n}$ of \eqref{moments_Cn}. Specifically, for
$n=N$, we get $\E{C_{N}} = \sum_{l=1}^N P_l$ for the expected total number of distinct traits in the population (denoted by $C_{\rm{pop}}$ in some of the previous studies), as it must.

\bibliographystyle{chicago}
%\bibliography{References}

\end{document}